\documentclass[11pt,reqno]{amsart}
\usepackage{amsmath,amssymb,amsthm,bbm,mathtools,calc,verbatim,enumitem,tikz,url,mathrsfs,cite,fullpage}
\usepackage{float}
\usepackage[colorlinks,linkcolor=blue,citecolor=blue]{hyperref}
\usepackage{subcaption}

\usepackage{comment}
\usepackage{tabulary}
\usepackage{tabularx}
\usepackage{multirow}
\usepackage{booktabs}
\usepackage{cite}
\usepackage{enumitem}
\usepackage{float}
\usepackage[capitalize]{cleveref}

\newtheorem{theorem}{Theorem}[section]
\newtheorem{lemma}[theorem]{Lemma}

\theoremstyle{definition}

\theoremstyle{remark}

\newcommand{\prob}[3]{
	\begin{center}
		\fbox{~\begin{minipage}{.97\textwidth}
				\vspace{2pt}
				\noindent
				\normalsize\textsc{#1}
				
				\vspace{4pt}
				\setlength{\tabcolsep}{3pt}
				\renewcommand{\arraystretch}{1.0}
				\begin{tabularx}{\textwidth}{@{}lX@{}}
					\normalsize\textbf{Input:}	& \normalsize#2 \\
					\normalsize\textbf{Question:}		 & \normalsize#3
				\end{tabularx}
		\end{minipage}}
	\end{center}
}

\DeclareMathOperator{\vc}{vc}
\DeclareMathOperator{\fvs}{fvn}

\DeclareMathOperator{\tw}{tw}

\DeclareMathOperator{\mw}{mw}
\DeclareMathOperator{\pw}{pw}
\DeclareMathOperator{\nd}{nd}
\DeclareMathOperator{\td}{td}

\newcommand{\fm}{\textsc{Fair Matching}}
\newcommand{\fc}{\textsc{Fair Coloring}}

\newcommand{\umbp}{\textsc{Unary Multi-dimensional Bin Packing}}
\newcommand{\gt}{\textsc{Grid Tiling}}
\newcommand{\clique}{\textsc{Clique}}

\begin{document}

\title{Parameterized Complexity of Fair Coloring Problem}

\author{Ramin Javadi \and Hossein Shokouhi}

\address{Department of Mathematical Sciences, Isfahan University of Technology, P.O. Box 84156-83111, Isfahan, Iran.}\email{rjavadi@iut.ac.ir}

\begin{abstract}
Given a graph $G=(V,E)$, a (proper) $k$-coloring for $G$ is a vertex coloring with $k$ colors such that every two adjacent vertices receive different colors. Suppose that the vertex set $V$ is partitioned into some groups, a proper coloring is called fair if for every color class, the difference between the number of vertices in any two groups does not exceed a given threshold. In this paper, we investigate the parameterized complexity of the fair coloring problem with respect to the structural parameters of the input graph. 
In particular, we prove that the problem is W[1]-hard with respect to the number of groups for forests and also graphs of modular-width two, even when the number of colors is equal to two. On the positive side, we prove that when the number of colors is equal to two, then the problem is FPT with respect to  neighborhood diversity of the input graph. Moreover, in general, the problem is FPT with respect to neighborhood diversity and the number of groups. As a by-product, we prove that unary vector bin packing problem is W[1]-hard with respect to the dimension. 
\end{abstract}

\maketitle
\section{Introduction}
Given a graph $G=(V,E)$ and a positive integer $k$, a (proper) $k$-coloring for $G$ is a labeling of its vertices with $k$ colors such that every two adjacent vertices receive different colors. The set of all vertices of the same color is called a color class. In this paper, we study the problem of finding a fair coloring that satisfies a fairness Max-Min criterion. In particular, we suppose that  the vertex set $V$ is partitioned into $p$ groups and a positive integer $\ell$ is given.  We aim to find a proper $k$-coloring such that in each color class, the number of vertices of any two groups differs by at most $\ell$. We call such a coloring an $\ell$-fair coloring. 

Fairness has become a central concept in graph theory due to its ability to model the equitable allocation of resources, services, and opportunities over networked systems. Early work on fairness can be traced to classical graph optimization problems such as equitable graph coloring, introduced by Meyer \cite{meyer1973}, where the objective is to partition vertices into color classes of nearly equal size while maintaining a proper coloring. Since then, fairness has evolved into a broad research area encompassing fair graph partitioning, proportional matching, balanced separators, envy-free and equitable allocations on graphs, and fair network design. 
More recently, the emergence of algorithmic fairness and responsible artificial intelligence has renewed interest in fairness-aware graph algorithms, particularly for graph mining, social network analysis, graph neural networks, and recommendation systems, where fairness constraints are incorporated to reduce demographic bias while preserving algorithmic utility \cite{barocas2023}. From an algorithmic perspective, many fairness problems on graphs are NP-hard, motivating extensive research on exact, approximation, and parameterized algorithms and have found numerous applications in load balancing for distributed systems, communication and transportation networks, political districting, resource allocation, public service planning, clustering, and machine learning \cite{bera2019,barocas2023}.

\paragraph*{Our Contributions}
In this paper, we investigate the parameterized complexity of fair coloring problem with respect to different structural parameters of the input graph such as modular-width ($\mw$), tree-depth ($\td$), neighborhood diversity ($\nd$) and feedback vertex number ($\fvs$)  as well as the intrinsic parameters such as the number of colors and the number of groups. In particular, we prove that the problem is
FPT with respect to neighborhood diversity when $k=2$ (Theorem~\ref{thm:nd}) and with respect to neighborhood diversity and the number of groups, in general (Theorem~\ref{thm:p+nd}). We also prove that the problem is W[1]-hard with respect to the number of groups for forests and graphs with modular-width equal to two, even when $k=2$ (Theorem~\ref{thm:w1_p}).
The former result is obtained by a parameterized reduction from a problem called \umbp\ which we prove is W[1]-hard with respect to the dimension. Finally, we prove that the problem is in XP with respect to the number of groups when $k=2$ (Theorem~\ref{thm:xp}). A summary of our main results is depicted in Figure~\ref{fig:summary}.  For definitions of the  notions in parameterized complexity, see \cite{cygan}. Also, for the definition of structural parameters of graphs see \cite{lampis}.

\begin{figure}[ht]
	\begin{center}   
		
		\begin{tikzpicture}[x=0.75pt,y=0.75pt,yscale=-1,xscale=1]
		
		\draw    (280,420) -- (280,372) ;
		\draw [shift={(280,370)}, rotate = 90] [color={rgb, 255:red, 0; green, 0; blue, 0 }  ][line width=0.75]    (10.93,-3.29) .. controls (6.95,-1.4) and (3.31,-0.3) .. (0,0) .. controls (3.31,0.3) and (6.95,1.4) .. (10.93,3.29)   ;
		\draw  [fill={rgb, 255:red, 126; green, 211; blue, 33 }  ,fill opacity=1 ] (235,426) .. controls (235,422.69) and (237.69,420) .. (241,420) -- (319,420) .. controls (322.31,420) and (325,422.69) .. (325,426) -- (325,444) .. controls (325,447.31) and (322.31,450) .. (319,450) -- (241,450) .. controls (237.69,450) and (235,447.31) .. (235,444) -- cycle ;
		\draw  [fill={rgb, 255:red, 126; green, 211; blue, 33 }  ,fill opacity=1 ] (350.5,346) .. controls (350.5,342.69) and (353.19,340) .. (356.5,340) -- (414,340) .. controls (417.31,340) and (420,342.69) .. (420,346) -- (420,364) .. controls (420,367.31) and (417.31,370) .. (414,370) -- (356.5,370) .. controls (353.19,370) and (350.5,367.31) .. (350.5,364) -- cycle ;
		\draw  [fill={rgb, 255:red, 248; green, 231; blue, 28 }  ,fill opacity=1 ] (350,266) .. controls (350,262.69) and (352.69,260) .. (356,260) -- (413.5,260) .. controls (416.81,260) and (419.5,262.69) .. (419.5,266) -- (419.5,284) .. controls (419.5,287.31) and (416.81,290) .. (413.5,290) -- (356,290) .. controls (352.69,290) and (350,287.31) .. (350,284) -- cycle ;
		\draw  [fill={rgb, 255:red, 248; green, 231; blue, 28 }  ,fill opacity=1 ] (350,186) .. controls (350,182.69) and (352.69,180) .. (366,180) -- (409,180) .. controls (412.31,180) and (415,182.69) .. (415,186) -- (415,204) .. controls (415,207.31) and (412.31,210) .. (409,210) -- (366,210) .. controls (352.69,210) and (350,207.31) .. (350,204) -- cycle ;
		\draw  [fill={rgb, 255:red, 248; green, 28; blue, 28 }  ,fill opacity=1 ] (350,106) .. controls (350,102.69) and (352.69,100) .. (356,100) -- (413.5,100) .. controls (416.81,100) and (419.5,102.69) .. (419.5,106) -- (419.5,124) .. controls (419.5,127.31) and (416.81,130) .. (413.5,130) -- (356,130) .. controls (352.69,130) and (350,127.31) .. (350,124) -- cycle ;
		\draw  [fill={rgb, 255:red, 126; green, 211; blue, 33 }  ,fill opacity=1 ] (470,346) .. controls (470,342.69) and (472.69,340) .. (476,340) -- (554,340) .. controls (557.31,340) and (560,342.69) .. (560,346) -- (560,364) .. controls (560,367.31) and (557.31,370) .. (554,370) -- (476,370) .. controls (472.69,370) and (470,367.31) .. (470,364) -- cycle ;
		\draw  [fill={rgb, 255:red, 248; green, 28; blue, 28 }  ,fill opacity=1 ] (110,346) .. controls (110,342.69) and (112.69,340) .. (116,340) -- (194,340) .. controls (197.31,340) and (200,342.69) .. (200,346) -- (200,364) .. controls (200,367.31) and (197.31,370) .. (194,370) -- (116,370) .. controls (112.69,370) and (110,367.31) .. (110,364) -- cycle ;
		\draw  [fill={rgb, 255:red, 248; green, 28; blue, 28 }  ,fill opacity=1 ] (477,186) .. controls (477,182.69) and (479.69,180) .. (483,180) -- (549,180) .. controls (552.31,180) and (555,182.69) .. (555,186) -- (555,204) .. controls (555,207.31) and (552.31,210) .. (549,210) -- (483,210) .. controls (479.69,210) and (477,207.31) .. (477,204) -- cycle ;
		\draw  [fill={rgb, 255:red, 248; green, 28; blue, 28 }  ,fill opacity=1 ] (235,346) .. controls (235,342.69) and (237.69,340) .. (241,340) -- (319,340) .. controls (322.31,340) and (325,342.69) .. (325,346) -- (325,364) .. controls (325,367.31) and (322.31,370) .. (319,370) -- (241,370) .. controls (237.69,370) and (235,367.31) .. (235,364) -- cycle ;
		\draw  [fill={rgb, 255:red, 126; green, 211; blue, 33 }  ,fill opacity=1 ] (465,266) .. controls (465,262.69) and (467.69,260) .. (471,260) -- (539,260) .. controls (542.31,260) and (545,262.69) .. (545,266) -- (545,284) .. controls (545,287.31) and (542.31,290) .. (539,290) -- (471,290) .. controls (467.69,290) and (465,287.31) .. (465,284) -- cycle ;
		\draw  [fill={rgb, 255:red, 248; green, 28; blue, 28 }  ,fill opacity=1 ] (563.5,266) .. controls (563.5,262.69) and (566.19,260) .. (569.5,260) -- (649,260) .. controls (652.31,260) and (655,262.69) .. (655,266) -- (655,284) .. controls (655,287.31) and (652.31,290) .. (649,290) -- (569.5,290) .. controls (566.19,290) and (563.5,287.31) .. (563.5,284) -- cycle ;
		\draw    (505,260) -- (505,212) ;
		\draw [shift={(505,210)}, rotate = 90] [color={rgb, 255:red, 0; green, 0; blue, 0 }  ][line width=0.75]    (10.93,-3.29) .. controls (6.95,-1.4) and (3.31,-0.3) .. (0,0) .. controls (3.31,0.3) and (6.95,1.4) .. (10.93,3.29)   ;
		\draw    (505,340) -- (505,292) ;
		\draw [shift={(505,290)}, rotate = 90] [color={rgb, 255:red, 0; green, 0; blue, 0 }  ][line width=0.75]    (10.93,-3.29) .. controls (6.95,-1.4) and (3.31,-0.3) .. (0,0) .. controls (3.31,0.3) and (6.95,1.4) .. (10.93,3.29)   ;
		\draw    (385,180) -- (385,132) ;
		\draw [shift={(385,130)}, rotate = 90] [color={rgb, 255:red, 0; green, 0; blue, 0 }  ][line width=0.75]    (10.93,-3.29) .. controls (6.95,-1.4) and (3.31,-0.3) .. (0,0) .. controls (3.31,0.3) and (6.95,1.4) .. (10.93,3.29)   ;
		\draw    (385,260) -- (385,212) ;
		\draw [shift={(385,210)}, rotate = 90] [color={rgb, 255:red, 0; green, 0; blue, 0 }  ][line width=0.75]    (10.93,-3.29) .. controls (6.95,-1.4) and (3.31,-0.3) .. (0,0) .. controls (3.31,0.3) and (6.95,1.4) .. (10.93,3.29)   ;
		\draw    (385,340) -- (385,292) ;
		\draw [shift={(385,290)}, rotate = 90] [color={rgb, 255:red, 0; green, 0; blue, 0 }  ][line width=0.75]    (10.93,-3.29) .. controls (6.95,-1.4) and (3.31,-0.3) .. (0,0) .. controls (3.31,0.3) and (6.95,1.4) .. (10.93,3.29)   ;
		\draw    (280,420) -- (383.19,370.86) ;
		\draw [shift={(385,370)}, rotate = 154.54] [color={rgb, 255:red, 0; green, 0; blue, 0 }  ][line width=0.75]    (10.93,-3.29) .. controls (6.95,-1.4) and (3.31,-0.3) .. (0,0) .. controls (3.31,0.3) and (6.95,1.4) .. (10.93,3.29)   ;
		\draw    (280,420) -- (474.06,370.49) ;
		\draw [shift={(476,370)}, rotate = 165.69] [color={rgb, 255:red, 0; green, 0; blue, 0 }  ][line width=0.75]    (10.93,-3.29) .. controls (6.95,-1.4) and (3.31,-0.3) .. (0,0) .. controls (3.31,0.3) and (6.95,1.4) .. (10.93,3.29)   ;
		\draw    (505,340) -- (598.23,290.93) ;
		\draw [shift={(600,290)}, rotate = 152.24] [color={rgb, 255:red, 0; green, 0; blue, 0 }  ][line width=0.75]    (10.93,-3.29) .. controls (6.95,-1.4) and (3.31,-0.3) .. (0,0) .. controls (3.31,0.3) and (6.95,1.4) .. (10.93,3.29)   ;
		\draw    (275,420) -- (156.85,370.77) ;
		\draw [shift={(155,370)}, rotate = 22.62] [color={rgb, 255:red, 0; green, 0; blue, 0 }  ][line width=0.75]    (10.93,-3.29) .. controls (6.95,-1.4) and (3.31,-0.3) .. (0,0) .. controls (3.31,0.3) and (6.95,1.4) .. (10.93,3.29)   ;
		\draw    (569.5,260) -- (526.33,211.49) ;
		\draw [shift={(525,210)}, rotate = 48.33] [color={rgb, 255:red, 0; green, 0; blue, 0 }  ][line width=0.75]    (10.93,-3.29) .. controls (6.95,-1.4) and (3.31,-0.3) .. (0,0) .. controls (3.31,0.3) and (6.95,1.4) .. (10.93,3.29)   ;
		\draw    (385,340) -- (481.22,290.91) ;
		\draw [shift={(483,290)}, rotate = 152.97] [color={rgb, 255:red, 0; green, 0; blue, 0 }  ][line width=0.75]    (10.93,-3.29) .. controls (6.95,-1.4) and (3.31,-0.3) .. (0,0) .. controls (3.31,0.3) and (6.95,1.4) .. (10.93,3.29)   ;
		\draw    (505,260) -- (415.26,210.96) ;
		\draw [shift={(413.5,210)}, rotate = 28.65] [color={rgb, 255:red, 0; green, 0; blue, 0 }  ][line width=0.75]    (10.93,-3.29) .. controls (6.95,-1.4) and (3.31,-0.3) .. (0,0) .. controls (3.31,0.3) and (6.95,1.4) .. (10.93,3.29)   ;
		\draw    (515,180) -- (421.23,125.01) ;
		\draw [shift={(419.5,124)}, rotate = 30.39] [color={rgb, 255:red, 0; green, 0; blue, 0 }  ][line width=0.75]    (10.93,-3.29) .. controls (6.95,-1.4) and (3.31,-0.3) .. (0,0) .. controls (3.31,0.3) and (6.95,1.4) .. (10.93,3.29)   ;
		
		\draw (248,346) node [anchor=north west][inner sep=0.75pt]    {$\td+p+k$};
		\draw (485,266) node [anchor=north west][inner sep=0.75pt]    {$\nd+p$};
		\draw (490,186) node [anchor=north west][inner sep=0.75pt]    {$\mw+p$};
		\draw (248,425) node [anchor=north west][inner sep=0.75pt]    {$\vc+p+k$};
		\draw (482,346) node [anchor=north west][inner sep=0.75pt]    {$\nd+p+k$};
		\draw (574,265) node [anchor=north west][inner sep=0.75pt]    {$\mw+p+k$};
		\draw (366,346) node [anchor=north west][inner sep=0.75pt]    {$\vc+p$};
		\draw (377,270) node [anchor=north west][inner sep=0.75pt]    {$\vc$};
		\draw (377,188) node [anchor=north west][inner sep=0.75pt]    {$\nd$};
		\draw (369,110) node [anchor=north west][inner sep=0.75pt]    {$\mw$};
		\draw (125,346) node [anchor=north west][inner sep=0.75pt]    {$\fvs+p+k$};
		\draw (505,242) node [anchor=north west][inner sep=0.75pt]    {Th.~\ref{thm:p+nd}};
		\draw (266,324) node [anchor=north west][inner sep=0.75pt]    {Th.~\ref{thm:w1_p}};
		\draw (134,324) node [anchor=north west][inner sep=0.75pt]    {Th.~\ref{thm:w1_p}};
		\draw (599,244) node [anchor=north west][inner sep=0.75pt]    {Th.~\ref{thm:w1_p}};
		\draw (385,160) node [anchor=north west][inner sep=0.75pt]    {Th.~\ref{thm:nd}};

		\end{tikzpicture}
	\end{center}
	\caption{Summary of our main results regarding \fc\ problem. An arrow from $f$ to $g$ means that $g$ is bounded by a function of $f$ and so W[1]-hardness result with respect to $f$ implies W[1]-hardness with respect to $g$. Parameters marked by green are proved to be FPT. Parameters marked by red are proved to be W[1]-hard. Parameters marked by yellow is known to be FPT in the case of $k=2$ and its complexity is unknown in general case.} \label{fig:summary}
\end{figure}
\paragraph*{Related Work}
Parameterized complexity of coloring problems has been extensively investigated in the literature. Finding a proper $k$-coloring is known to be NP-hard even when $k=3$ on general graphs \cite{garey76}. The problem is FPT with respect to treewidth \cite{bodlaender1988} and neighborhood diversity of the input graph \cite{ganian2012}. The problem is known to be W[1]-hard and also in XP with respect to cliquewidth \cite{fomin2009,rao2007}.
Some extensions of the coloring problem have been investigated such as equitable coloring, precoloring extension and list coloring problems. These three problems are known to be W[1]-hard with respect to treewidth \cite{fellows2011}. The former problem is also W[1]-hard with respect to vertex cover and the other two problems are FPT with respect to vertex cover \cite{fiala2011}. In \cite{belmonte2020}, a variant of coloring problem called defective coloring is investigated in which, given integers $k$, $\Delta$, we are looking for a $k$-coloring such that the induced graph on each color class has maximum degree $\Delta$. They showed that the problem is W[1]-hard with respect to tree-depth and feedback vertex set when $k=2$. 

The study of fairness notion in coloring problems dates back to 1973 in \cite{meyer1973} where the \textsc{Equitable Coloring} problem is investigated. In this problem, we seek for a proper coloring of vertices such that the size of color classes are almost equal, i.e. 
the difference between the size of color classes is at most one. This fairness measure is called \textit{Max-Min criterion}. There are other fairness measures like \textit{margin of victory} (MoV) which is defined as the difference of the most and the second frequent class. The MoV measure is studied e.g. in \textsc{Fair Connected Districting} problem in which we are given an integer $\ell$, a graph and a partition of vertices into $p$ groups and we are looking for a partition of vertices into $k$ districts (connected subgraphs) such that each district is $\ell$-fair with MoV measure, i.e. in each district, the difference of the most and the second frequent groups is at most $\ell$. This problem is firstly introduced by Stoica et al. \cite{stoica} and, the parameterized complexity of the problem is investigated in \cite{boehmer23}. In particular, it is proved that the problem is W[1]-hard with respect to treewidth, $k$ and $p$, combined. Also, it is W[1]-hard with respect to feedback edge number and $k$, combined. It is also FPT with respect to vertex cover number and $p$, combined. 

The Max-Min fairness measure is considered e.g. in \textsc{Fair Shortest Path} problem \cite{bentert} in which we are given a graph $G$, a  partition of vertices into $p$ groups, two vertices $s,t$ and two integers $\ell, \delta$ and the question is that if there exists an $s-t-$path $P$ of length at most $\ell$ such that the difference of the numbers of vertices in $P$ in each two groups is bounded by $\delta$. Bentert et al. \cite{bentert} proved that \textsc{Fair Shortest Path} is in XP and W[1]-hard with respect to $p$ and is FPT with respect to $\ell$ and $\delta$. 

Fairness is also studied in matching problems. Given a bipartite graph $G=(U\cup V,E)$, a left-perfect many-to-one matching is a subset $M \subseteq E$ such that each vertex in $U$ is incident with exactly one edge in $M$. In \fm\ problem,  given an integer $\ell$, a bipartite graph $G=(U\cup V,E)$ and a partition of $U$ into $p$ groups, we are looking for a left-perfect many-to-one matching which is $\ell$-fair, i.e. the neighborhood of each vertex in $V$ is $\ell$-fair with respect to Max-Min or MoV criteria. In \cite{boehmer24}, it is proved that \fm\ is NP-hard for $p=3$ and $\Delta=5$ and FPT with respect to $|V|$. In \cite{javadi25}, it is proved that the problem is W[1]-hard with respect to $\fvs$, $\td+\Delta_U$, and $\pw+p+\Delta_U$. On the positive side, it is proved that the problem is FPT with respect to $\tw+\Delta_V$, $\nd$, and $\td+ p$. 

\section{Problem Formulation and Preliminaries}
For positive integers $m,n,m < n $, let us define $[m,n]=\{m,m+1,\dots,n\}$ and $[n]=[1,n]$. For a graph $G=(V,E)$ and two subsets $A,B \subseteq E$ the set $E(A,B)$ denotes the set of all edges in $E$ with one endpoint in $A$ and one endpoint in $B$. Also, for a vertex $v \in V$, we say that $v$ is complete (resp. incomplete) to $A$ if $v$ is adjacent (resp. nonadjacent) to all vertices in $A$.

Let $G=(V,E) $ be a graph and $k$ be a positive integer. A (proper) $k$-coloring for $G$ is a function $c:V(G)\to [k]$ such that for any two adjacent vertices $u$ and $v$, we have $c(u)\neq c(v)$. For each $i\in [k]$, the color class $C_i=c^{-1}(i)$ is defined as the vertices whose color is $i$ which is a stable set. 

Now, let $(V_1,\ldots, V_p)$ be a partition of $V(G)$ into $p$ non-empty subsets. Now, let $\ell$ be an integer. We say that a subset $U\subseteq V(G)$ is \textit{$\ell$-fair} if  $\max_{j\in [p]} |U\cap V_j|-\min_{j\in [p]} |U\cap V_j|\leq \ell  $.
A coloring $c:V(G)\to [k]$ is called \textit{$\ell$-fair} if for each $i\in [k]$, the color class $C_i$ is $\ell$-fair. 

\prob{Fair Coloring}{A graph $G=(V,E)$, a partition $(V_1,\ldots,V_p)$ of $V$ into non-empty groups, three positive integers $k$ and a non-negative integer $\ell$.}{Is there a proper $k$-coloring for $G$ which is $\ell$-fair?}

We need the idea of balanced coloring in the proof of Theorem~\ref{thm:p+nd}. Let $G$ be a multigraph and $c:E(G)\to \{1,\ldots, k\}$ be an edge coloring of $G$. For each $i\in [k]$ and every pair of vertices $u,v\in V(G)$, define $C_i(v)$ and $C_i(u,v)$ to be the set of all edges of color $i$ incident with $v$, and the set of all edges of color $i$ with endpoints $u,v$, respectively. We say that $c$ is a \textit{balanced coloring}, if for every vertex $v\in V(G)$, 
\[
\max_{1\leq i<j\leq k} \big| |C_i(v)|-|C_j(v)|\big|\leq 1,
\] 
and for every pair of vertices $u,v \in V(G)$,
\[
\max_{1\leq i<j\leq k} \big| |C_i(u,v)|-|C_j(u,v)|\big|\leq 1.
\] 
The following lemma guarantees existence of a balanced coloring for every bipartite graph $G$ and every positive integer $k$.

\begin{lemma}{\rm \cite{dewerra75,dewerra71}} \label{lem:dewerra}
	For every bipartite multigraph $G$ and every positive integer $k$, $G$ admits a balanced $k$-edge coloring. 
\end{lemma}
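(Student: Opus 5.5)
The plan is to reduce the bipartite multigraph case to the regular case, where K\"onig's theorem applies, and then to fix the pair condition separately. Let $G$ have bipartition $(A,B)$. First I would handle the vertex condition alone. For each vertex $v$, write $d(v)=q_vk+r_v$ with $0\le r_v<k$. The goal is to pad $G$ into a bipartite multigraph $H$ in which every degree is a multiple of $k$. Each $v$ gets $k-r_v$ new pendant half-edges when $r_v\neq 0$. These half-edges go to new dummy vertices, and the dummies are grouped so that $H$ stays bipartite and can be made $Dk$-regular for some $D$, for example by copying $G$ on the opposite side, in the standard way.

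A $Dk$-regular bipartite multigraph decomposes into $Dk$ perfect matchings by K\"onig's theorem (repeated Hall). I would group these into $k$ color classes of $D$ matchings each. Restricting to $G$, each vertex $v$ then has between $q_v$ and $q_v+1$ edges of each color. This is because $v$ has $D$ edges of each color in $H$, and at most one padding edge per color class can be arranged: the padding at $v$ consists of fewer than $k$ edges, but a naive grouping need not spread them across distinct colors. So I would avoid relying on this directly.

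An alternative, cleaner route is to recolor by alternating-path swaps, which I would actually use. Start with any $k$-edge coloring. Define the potential
\[
\Phi=\sum_v\sum_i |C_i(v)|^2+\sum_{u,v}\sum_i|C_i(u,v)|^2 .
\]
Suppose the pair condition fails for some $u,v$ and colors $i,j$ with $|C_i(u,v)|\ge|C_j(u,v)|+2$. Recoloring one $uv$-edge from $i$ to $j$ strictly decreases the pair term. It changes the vertex counts at $u$ and $v$, though, so I would handle the vertex condition by a two-color argument. Fix colors $i,j$ and consider the subgraph $G_{ij}$ formed by the edges of colors $i$ and $j$. In a bipartite multigraph, $G_{ij}$ can be recolored with $i,j$ so that at every vertex the two counts differ by at most one, and also at every pair $u,v$ the counts differ by at most one. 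To do this, first split each parallel class $uv$ into pairs of parallel edges, colored $i$ and $j$, plus at most one leftover edge. The leftover edges form a simple bipartite graph. Decompose it into edge-disjoint paths and even cycles, using an Euler-type argument after pairing odd-degree vertices, and color each trail alternately. Because the graph is bipartite, every cycle is even, so each vertex is left with a discrepancy of at most one. This $2$-color balancing strictly decreases $\Phi$ unless the $i,j$ pair is already balanced. Since $\Phi$ is a nonnegative integer, iterating over bad color pairs terminates at a coloring balanced for every pair of colors, which is exactly a balanced coloring.

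The main obstacle is the two-color step: showing that the alternating coloring of the Euler decomposition gives per-vertex discrepancy at most one, and that it strictly reduces $\Phi$. The fact needed is that for fixed totals, the sum of squares is minimized exactly when the counts differ by at most one, so $\Phi$ drops whenever some discrepancy of at least $2$ is removed. Bipartiteness is used precisely to guarantee that closed trails have even length, so that alternating colors close up consistently.
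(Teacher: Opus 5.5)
Your final argument, the rebalancing route rather than the padding one, is correct. The paper does not prove this lemma; it simply cites de Werra \cite{dewerra75,dewerra71}. What you wrote is essentially the classical interchange proof behind that citation: rebalance two color classes at a time inside the bipartite subgraph $G_{ij}$ they span, and let the sum of squares $\Phi$ force termination.

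The two-color step is sound, but you should state its key invariant explicitly. First pre-pair parallel edges; this makes each $uv$-discrepancy at most one and adds equally to both colors at $u$ and $v$. Then join one auxiliary vertex to all odd-degree vertices of the simple leftover graph $L$ and take Euler tours. Removing the auxiliary edges leaves a trail decomposition of $L$ with three properties: every odd-degree vertex is a trail end exactly once, no even-degree vertex is a trail end, and the remaining closed trails lie in $L$ and are even by bipartiteness. That endpoint property, not merely a split into ``paths and even cycles,'' is what gives vertex discrepancy at most one.

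For termination: $x^2+y^2$ with $x+y$ fixed is minimised exactly when $|x-y|\le 1$. So no term of $\Phi$ increases, and some term strictly drops whenever colors $i,j$ are unbalanced at a vertex or a vertex pair. Hence the iteration ends in a balanced coloring.

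Compared with the bare citation, your proof buys two things. It is self-contained and constructive, with at most $\Phi_0=O(|E(G)|^2)$ rounds. The decomposition in Lemma~\ref{lem:balanced} is therefore computable in time polynomial in the sum of the entries of $A$ and in $b$, and those entries are at most $|V(G)|$ in Theorem~\ref{thm:p+nd}. It also proves more than the paper uses, since Lemma~\ref{lem:balanced} needs only the vertex condition; the pair condition and the pre-pairing step could be skipped for the application.

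Finally, delete the opening padding/K\"onig paragraph. As you note yourself, it does not control which colors the padding edges at a vertex receive, it ignores the pair condition, and nothing later depends on it.
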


Lemma~\ref{lem:dewerra} can be translated into matrix language in this setting. 
\begin{lemma}\label{lem:balanced}
	Let $A$ be an $m\times n$ matrix with non-negative integer entries and $b$ be a positive integer. Then, $A$ can be written as the sum of $b$ matrices $A^1+\cdots+A^b$ with non-negative integer entries, such that for each row $r\in [m]$, each column $c\in [n]$, and each $i,j\in [b]$, we have 
	\[
	|(A^i)_{r}- (A^j)_{r}|\leq 1,
	\] 
	and
	\[
	|(A^i)^{c}- (A^j)^{c}|\leq 1,
	\] 
	where $(A^i)_{r}$ is the sum of entries of row $r$ in $A^i$ and $(A^i)^{c}$ is the sum of entries of column $c$ in $A^i$.
\end{lemma}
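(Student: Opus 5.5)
We deduce Lemma~\ref{lem:balanced} from Lemma~\ref{lem:dewerra} by encoding $A$ as a bipartite multigraph. Let $G$ be the bipartite multigraph with parts $R=\{r_1,\dots,r_m\}$ (one vertex per row) and $C=\{c_1,\dots,c_n\}$ (one vertex per column). We join $r_i$ and $c_j$ by exactly $A_{ij}$ parallel edges. Then $\deg(r_i)$ is the $i$-th row sum of $A$, $\deg(c_j)$ is the $j$-th column sum, and the edge multiplicity between $r_i$ and $c_j$ is the entry $A_{ij}$. Isolated vertices, arising from zero rows or columns, cause no trouble.

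Next we apply Lemma~\ref{lem:dewerra} with $k=b$ to get a balanced $b$-edge coloring of $G$. For each $t\in[b]$ we define $A^t$ by setting $(A^t)_{ij}=|C_t(r_i,c_j)|$, the number of edges of color $t$ between $r_i$ and $c_j$. These entries are non-negative integers. Since every edge receives exactly one color, $\sum_t (A^t)_{ij}=A_{ij}$, so $A=A^1+\cdots+A^b$.

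Finally we check the two inequalities. The $r$-th row sum of $A^t$ is $\sum_j |C_t(r_r,c_j)|=|C_t(r_r)|$, because every edge at $r_r$ goes to some column vertex. Likewise, the $c$-th column sum of $A^t$ equals $|C_t(c_c)|$. The vertex condition in the definition of a balanced coloring therefore gives $|(A^i)_r-(A^j)_r|\le 1$ and $|(A^i)^c-(A^j)^c|\le 1$ for all $i,j\in[b]$. The pair condition additionally yields entrywise balance, which is not needed here.

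There is no real obstacle; the only point to verify is the identification of row and column sums with monochromatic degrees, which holds because $G$ is bipartite with all edges running between $R$ and $C$.
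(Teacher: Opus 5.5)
Your proof is correct and takes the paper's approach: the paper gives no separate argument, only the remark that Lemma~\ref{lem:dewerra} ``can be translated into matrix language.'' Your encoding (a row/column bipartite multigraph with $A_{ij}$ parallel edges, a balanced $b$-edge coloring, and $A^t$ given by the color-$t$ edge counts) is exactly that translation written out.
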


We also need the following classical theorem by Lenstra about solving integer programming.

\begin{theorem} {\rm \cite{lenstra}} \label{thm:lenstra}
	Every integer programming with $k$ variables can be solved in time $O^*(k^{O(k)})$.
\end{theorem}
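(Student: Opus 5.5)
The plan is the standard geometric approach of Lenstra, in the form later sharpened by Kannan to get the $k^{O(k)}$ dependence. I treat the feasibility version first: given $A\in\mathbb{Z}^{m\times k}$ and $b\in\mathbb{Z}^m$, decide whether $P=\{x\in\mathbb{R}^k : Ax\le b\}$ contains an integer point. Optimisation then follows by binary search on the objective value, adding the objective as one extra constraint. The number of search steps is polynomial in the encoding length, because an optimal integer solution, when one exists, has encoding length polynomial in that of the input. Unboundedness is detected through the LP relaxation together with the recession cone. Reducing to a full-dimensional $P$ is routine: if $P$ lies in an affine subspace, I compute that subspace and its integer points by Hermite normal form, which lowers the dimension.

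\textbf{Rounding.} Using the ellipsoid method, or a simplex-based construction, I compute in polynomial time an ellipsoid $E$ with centre $z$ such that $E \subseteq P \subseteq z+ k^{3/2}(E-z)$. I then apply the linear map $T$ that sends $E$ to a ball. The lattice $\mathbb{Z}^k$ becomes $L=T\mathbb{Z}^k$, and $P$ becomes a body squeezed between a ball of radius $r$ and a concentric ball of radius $k^{3/2} r$.

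\textbf{Lattice reduction and the dichotomy.} I compute a reduced basis of $L$, either LLL, or for the $k^{O(k)}$ bound a Korkine--Zolotarev basis computed with Kannan's algorithm. There are two cases.
\begin{itemize}
\item If $r$ is large compared with the basis lengths, the inner ball contains a lattice point. I find one by rounding the centre with respect to the reduced basis, and answer ``yes''.
\item Otherwise the flatness argument applies: there is a dual lattice vector $w$ with $\max_{x\in P}\langle w,x\rangle-\min_{x\in P}\langle w,x\rangle \le f(k)$.
\end{itemize}
In the second case every integer point of $P$ lies on one of at most $f(k)+1$ parallel lattice hyperplanes $\langle w,x\rangle=t$. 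I branch on these hyperplanes and recurse on each slice, which is a polytope of dimension $k-1$. With Kannan's bounds one can take $f(k)=k^{O(1)}$, and a more careful amortised count of the branches gives $T(k)\le k^{O(1)}\,T(k-1)$ times a polynomial factor. This yields $k^{O(k)}\cdot\mathrm{poly}(|\text{input}|)$ overall, that is, $O^*(k^{O(k)})$.

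\textbf{Main obstacle.} The hard step is the flatness bound with only polynomial dependence on $k$, and the bookkeeping that keeps the total branching at $k^{O(k)}$ rather than $2^{O(k^2)}$ as in Lenstra's original analysis. For this I would rely on Kannan's HKZ-basis argument together with Banaszczyk-type transference bounds, while checking that all numbers stay polynomially bounded in size through the recursion.
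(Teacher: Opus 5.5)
The paper does not prove this statement. It imports the bound from \cite{lenstra} as a black box. Your outline is the standard geometry-of-numbers argument behind such a citation, and it is sound as a sketch. It is also more precise than the paper's attribution. Lenstra's original LLL-based analysis only gives a dependence of the form $2^{\mathrm{poly}(k)}$, usually quoted as $2^{O(k^3)}$ rather than $2^{O(k^2)}$. The $k^{O(k)}$ bound in the statement needs Kannan's refinement; the form usually cited is $O(k^{2.5k+o(k)})$ arithmetic operations, due to Lenstra, Kannan and Frank--Tardos. The citation buys brevity, while your sketch shows which ingredients actually produce $k^{O(k)}$.

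Three places in your sketch should be tightened.

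\textbf{Flatness bound.} Combining Banaszczyk's transference bound $\mu(L)\,\lambda_1(L^*)\le k/2$ with your $k^{3/2}$ rounding factor gives $f(k)=O(k^{5/2})$ directly. No amortised count is then needed: the branching tree has at most $\prod_{i\le k}(f(i)+1)=k^{O(k)}$ nodes.

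\textbf{Computing the flat direction.} You must compute $w$, not merely know it exists. Take $w$ to be a shortest nonzero vector of $L^*$, found by Kannan's enumeration in $k^{O(k)}$ time per node. Then either the outer ball has width at most $k^{5/2}$ along $w$ and you branch, or $\lambda_1(L^*)>k/(2r)$. In the second case transference gives $\mu(L)<r$, so the inner ball, and hence $P$, contains a lattice point.

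\textbf{The recurrence.} It must read $T(k)\le (f(k)+1)\,T(k-1)+k^{O(k)}\,\mathrm{poly}(L)$. The per-node cost is additive, and the encoding length $L$ of the subproblems must stay polynomial in the original input through all $k$ levels. Your wording (``times a polynomial factor'') suggests a multiplicative polynomial factor per level, which would only give $\mathrm{poly}(L)^{k}$, an XP bound. The size control you flag is therefore genuinely required, not optional bookkeeping.
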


\section{W[1]-hard Results}

In this section, we prove that \fc\ is  W[1]-hard with respect to the number of groups $p$ even when the number of colors $k$ is equal to two on forests and graphs with modular-width equal to two. In order to prove this result, we used an intermediate problem called unary vector bin packing in which each item has $d$ different weights (dimensions) and 
selected items satisfy $d$ capacity constraints in the $d$ dimensions. We will prove that the problem is W[1]-hard with respect to $d$ even when the weights are given in unary encoding. 

\begin{theorem} \label{thm:w1_p}
	\fc\ is W[1]-hard with respect to $p$ even when $k=2$ for the following input graphs:
	\begin{itemize}
		\item[\rm (i)] graphs of modular-width $2$.
		\item[\rm (ii)] forests whose every component has depth equal to $2$.
	\end{itemize} 
\end{theorem}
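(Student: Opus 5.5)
We will give a parameterized reduction from \umbp, which we first show to be W[1]-hard with respect to the dimension $d$. In this problem we are given $n$ items, each a vector $w_i\in\mathbb{Z}_{\ge 0}^d$ encoded in unary, a number $b$ of bins and a capacity vector, and we ask whether the items can be packed into $b$ bins without exceeding the capacity in any coordinate. For the hardness of \umbp\ we would reduce from \gt\ (or from \clique\ with parameter $\kappa$), using $d=O(\kappa^2)$ dimensions. Each row/column choice is encoded by an item whose coordinates carry the selected index $x$ as an amount $x$ in one coordinate and $N-x$ in a partner coordinate. The tight capacities then force equality of the indices chosen for neighbouring cells. Since all numbers are polynomial in the input, unary encoding is allowed. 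Actually, for our purpose it suffices to use the special case with exactly two bins and tight capacities: the total weight in every coordinate equals twice the capacity, so each bin receives exactly half in every dimension. We will aim for hardness of this balanced partition version.

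\textbf{Reduction to \fc\ with $k=2$.} Given an instance of the $2$-bin, $d$-dimensional balanced version, we take $p=d+1$ groups (or $2d$ if needed) and $\ell=0$. Each item $i$ becomes a gadget whose vertices must receive a single colour. For (ii), item $i$ becomes a star: a centre vertex plus leaves. A star has depth $2$ in the stated sense; alternatively one can use a path centre--middle--leaves. The vertices of the gadget are distributed so that $w_i(j)$ of them lie in group $V_j$.

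The difficulty is that a proper $2$-colouring of a star puts the centre and the leaves in opposite classes. We handle this by making the gadget's contribution to colour $1$ versus colour $2$ encode the bin choice. Concretely, the centre lies in an auxiliary group $V_0$, and the leaves realise the weight vector $w_i$. Colouring the leaves $1$ puts the item in bin $1$, and the centre then goes to colour $2$. Each star contributes exactly one $V_0$ vertex to one of the two colours, so we add isolated padding vertices to make the count of $V_0$ in each colour class compatible with the fairness constraint $\ell=0$. Isolated vertices in every group provide further padding: by fixing their numbers appropriately, the condition $|C_1\cap V_j|=|C_1\cap V_{j'}|$ for all $j,j'$ becomes equivalent to the leaves coloured $1$ having total weight exactly half in each coordinate. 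Here we may first shift weights, e.g.\ add a common offset $M$ to every coordinate of every item, so that all groups must equal a common target and the tight per-coordinate capacity is recovered. Free isolated vertices could break the argument, so we will instead attach the padding to forced structures, such as a large star whose colouring is symmetric, or we will fix counts so that parity or size forces their placement.

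For (i), the graph of modular-width $2$ is obtained by replacing each star with a complete bipartite graph $K_{1,m}$ or a similar module. Alternatively, we take the disjoint union of such modules, each built by one join of two independent sets, and the argument is identical.

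\textbf{Main obstacle.} The main obstacle is the W[1]-hardness of \umbp\ in $d$ with unary weights, specifically designing the \gt\ gadgets so that per-coordinate sums being exactly balanced enforces the consistency constraints, while keeping $d$ bounded by a function of $\kappa$. The second delicate point is controlling the freely colourable padding vertices so that they cannot absorb imbalances. I would resolve this by making all padding rigid: each padding set is a component whose two colour classes contribute identical group vectors, so that only the item gadgets can affect fairness.
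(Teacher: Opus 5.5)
\textbf{There is a genuine gap.} Your overall architecture matches the paper's: a unary vector packing problem as source, one bipartite gadget per item with the weights on one side, padding, and $k=2$, $\ell=0$, $p=d+1$. But the step that makes fairness equivalent to the packing condition is never established, and as written it fails.

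\textbf{The centre group $V_0$.} In your gadget the other side of item $i$ is a single centre in $V_0$. Let $I$ be the set of items whose leaves get colour $1$; then colour $1$ contains exactly $n-|I|$ centres. So $|C_1\cap V_0|$ varies with $|I|$, and no fixed padding can make it ``compatible''. Take padding that contributes the same vector $P$ to both colours (your final choice). Then $V_0$ is the only reference, and for the right $P$ the constraints $|C_1\cap V_j|=|C_1\cap V_0|$ and $|C_2\cap V_j|=|C_2\cap V_0|$ are equivalent to $\sum_{i\in I}(w_{i,j}+1)=\tfrac12\sum_{i}(w_{i,j}+1)$. This is a balanced split of the shifted vectors $w_i+\mathbf{1}$, not of $w_i$. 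Your claimed equivalence with ``exactly half in each coordinate'' is therefore false unless $|I|$ is pinned by a further device, e.g.\ a constant coordinate, or your offset $M$ taken large and combined with dummy zero-weight items.

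\textbf{Hardness of the source variant.} You also need W[1]-hardness of exactly the variant you end up encoding, and that is missing. You sketch hardness of general vector bin packing and then say it ``suffices'' to use the two-bin balanced special case. Hardness does not transfer to a special case, so that variant needs its own reduction.

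\textbf{Rigidity of padding.} Asymmetric padding is rigid only if all of it lies in one component. Separate pieces, such as several stars, can be flipped independently.

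\textbf{How the paper avoids these problems.} The other side of item $i$'s gadget is $\{z_{i,1},\dots,z_{i,d}\}$, one vertex in each weight group. An unselected item thus adds $1$ uniformly to every $V_j$, $j\in[d]$, rather than to a separate group. All padding sits in one connected component with sides $S\cup\bigcup_j X_j$ and $T\cup\bigcup_j Y_j$, so its orientation can be fixed by swapping colours. The group $V_{d+1}=S\cup T$ is a fixed reference: $|S|=B+n-m$, $|T|$, $|X_j|=B-B_j$ and $|Y_j|=B_j+A-A_j$ pin arbitrary (not just balanced) targets in each colour class. Hence no balanced-partition variant is needed. The source problem is ``choose $m$ items with sums exactly $B_j$'', proved hard from \gt\ using entry coordinates $(i,j)^E$ that force one item per cell.

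The coupling to $|I|$ is present in the paper's construction too. Fairness actually yields $\sum_{i\in I}w_{i,j}-|I|=B_j-m$. The paper's substitution $|\mathcal{I}\setminus I|=n-m$ is justified only because, for $t\ge 2$, the entry coordinates then force $|I|=m$.
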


In order to prove the above theorem, we need an intermediate problem, called \umbp\ or {\sc Unary Vector Bin Packing} (see e.g. \cite{woeginger97,christensen16,chekuri04}). 

\prob{Unary Multi-dimensional (Vector) Bin Packing}{$(\mathcal{I},\{\mathbf{w}_i\}_{i\in \mathcal{I}}, \mathbf{B},m,d)$. Positive integers $m,d$, a set of items $\mathcal{I}$, where each item $i\in \mathcal{I}$ has a vector weight $\mathbf{w}_i=(w_{i,1},\ldots, w_{i,d}) \in (\mathbb{Z}^+)^{d}$, and a capacity vector $\mathbf{B}=(B_1,\ldots, B_d)\in (\mathbb{Z}^+)^{d}$. All weights and capacities are given in unary encoding.}{Is there a subset $I\subseteq \mathcal{I}$, such that $|I|=m$ and for each $j\in [d]$, we have $\sum_{i\in I}w_{i,j}=B_j$ (called capacity constraints)?}

Note that the case $d=1$ is the well-known Knapsack problem which is known to be NP-hard when the input is given in binary encoding and can be solved in polynomial time using dynamic programming, when the input is unary encoding (see \cite{garey76}). 
Here, we first prove that \umbp\ is W[1]-hard with respect to $d$ and $m$. 

\begin{theorem} \label{thm:umbp}
	\umbp\ is W[1]-hard with respect to $d$ and $m$, combined. 
\end{theorem}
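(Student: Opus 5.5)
We reduce from \clique, which is W[1]-hard with respect to the solution size $t$. Given a graph $H=(V_H,E_H)$ with $V_H=[n]$ and an integer $t$, we build an instance of \umbp\ with $m=\binom{t}{2}+t$ and $d=O(t^2)$. The items are of two kinds. There is a \emph{vertex item} $x_{a,v}$ for each slot $a\in[t]$ and each vertex $v\in[n]$. There is an \emph{edge item} $y_{a,b,u,v}$ for each pair of slots $1\le a<b\le t$ and each edge $uv\in E_H$, in both orientations $(u,v)$ and $(v,u)$. The intended solution picks the items $x_{a,v_a}$ and $y_{a,b,v_a,v_b}$, where $v_1,\ldots,v_t$ is a clique.

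The coordinates enforce this. Weights must be positive, so I first give every item weight $1$ in every coordinate and add $m$ to every capacity; the cardinality constraint $|I|=m$ then makes this shift harmless. I describe below only the extra weights. With $N=n+1$:
\begin{itemize}
\item[(1)] \emph{Selection.} For each slot $a$ there is a coordinate in which $x_{a,\cdot}$ has weight $1$, with capacity $1$. For each pair $\{a,b\}$ there is a coordinate in which $y_{a,b,\cdot,\cdot}$ has weight $1$, with capacity $1$. Since $|I|=m$, exactly one item is chosen per slot and per pair.
\item[(2)] \emph{Consistency.} For each ordered pair $(a,b)$ with $a\ne b$ there is a coordinate $c_{a,b}$. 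In it, $x_{a,v}$ has weight $v$, every edge item $y$ on $\{a,b\}$ whose endpoint in slot $a$ is $u$ has weight $N-u$, and the capacity is $N$. Once (1) holds, the chosen vertex $v_a$ and the endpoint $u$ of the chosen edge item in slot $a$ satisfy $v_a+N-u=N$, so $u=v_a$.
\end{itemize}
All weights are at most $n+1$, so they are polynomial in unary. The dimension is $d=t+\binom t2+t(t-1)=O(t^2)$ and $m=O(t^2)$, so both parameters are bounded by a function of $t$. The construction clearly takes polynomial time.

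Correctness in the forward direction is immediate from the intended solution. For the converse, the main obstacle is that the selection constraints in (1) are equalities only on the sum over a group of items. To extract them rigorously, I use that all extra weights are nonnegative. In the coordinate of slot $a$, the total extra weight must equal $1$, so at most one $x_{a,\cdot}$ is chosen, and likewise at most one item per pair. There are exactly $m=t+\binom t2$ groups (slots and pairs), and $|I|=m$, so each group contributes exactly one item.

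Consistency then forces every chosen edge item on $\{a,b\}$ to be $(v_a,v_b)$, which means $v_av_b\in E_H$. Edges have distinct endpoints, so $v_a\neq v_b$ for $a\ne b$. Hence $\{v_1,\ldots,v_t\}$ is a clique of size $t$. This gives a parameterized reduction, proving W[1]-hardness with respect to $d$ and $m$ combined.
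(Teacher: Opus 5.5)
Your reduction is correct. It differs from the paper's mainly in the source problem, not in the gadgets.

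The paper reduces from \gt. It relies on that problem's known W[1]-hardness, plus the remark that it stays hard with unary (polynomially bounded) entries because its hardness comes from \clique. Its construction has:
\begin{itemize}
\item one item $(i,j;x,y)$ for each cell and each pair in $S_{i,j}$;
\item an indicator coordinate of capacity $1$ for each cell;
\item for each two consecutive cells $(i-1,j),(i,j)$ in a column, a coordinate of capacity $N$ in which they contribute $N-x$ and $x$ respectively, and similarly for rows with $y$.
\end{itemize}
This gives $m=t^2$ and $d=3t^2-2t$.

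You go directly from \clique\ with vertex items and oriented edge items. You use the same two gadgets: capacity-$1$ indicator coordinates for selection, and complementary weights $v+(N-u)=N$ to force agreement. You obtain $m=t+\binom{t}{2}$ and $d=t+\binom{t}{2}+t(t-1)$, likewise $O(t^2)$.

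The paper's route is a bit leaner, because \gt\ already has the agreement structure built in: there is one item type, and only neighbouring cells need to agree. Yours is self-contained and keeps all weights at most $n+1$ without any separate remark about unary encoding. Your shift by $1$, with capacities raised by $m$, is harmless precisely because $|I|=m$. It also makes every weight strictly positive, as the definition $\mathbf{w}_i\in(\mathbb{Z}^+)^d$ asks. The paper's construction has many zero weights and does not address this.

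One small simplification: in the converse you do not need the counting argument. The extra weight in a selection coordinate is exactly the number of chosen items from that group, so that number must equal $1$ directly.
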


\begin{proof}
	We give a parameterized reduction from \gt. 
	
	\prob{Grid Tiling}{
		A positive integer $t$. For each $i,j\in [t]$, a subset $S_{i,j}\subseteq [N]\times [N] $. 
	}{Can we choose one pair $(x_{i,j},y_{i,j})\in S_{i,j}$ for each $i,j\in [t]$ such that $x_{1,j}=x_{2,j}=\cdots=x_{t,j}$ and $y_{i,1}=y_{i,2}=\cdots=y_{i,t}$?}
	
	It is known that \gt\ is W[1]-hard with respect to $t$ \cite{marx07,cygan}. Also, note that \gt\ remains W[1]-hard even when the input (the elements of $S_{i,j}$'s) is given in unary encoding, since the reduction is from \clique\  (see \cite{cygan}). 
	
	Let $\mathcal{S}=(S_{i,j})_{i,j\in [t]}$ be an instance of \gt. We construct an instance of \umbp\ as follows. For each $(i,j)\in [t]\times [t]$ and each $(x,y)\in S_{i,j}$ we have a distinct item $(i,j;x,y)$. So, 
	$$\mathcal{I}=\{(i,j;x,y)\mid (i,j)\in  [t]\times [t], (x,y)\in S_{i,j} \}.$$
	
	Also, the dimension is equal to $d=t^2+2t(t-1)$. For each pair $(i',j')\in [t]\times [t]$, we have one dimension $(i',j')^E$ (corresponding to entries), for each pair $(i',j')\in [2,t]\times [t]$, we have one dimension $(i',j')^C$ (corresponding to columns) and for each pair $(i',j')\in [t]\times [2,t]$, we have one dimension $(i',j')^R$ (corresponding to rows). For each item $(i,j;x,y)$ and each $(i',j')$, the weights are defined as follows.
	
	\begin{align*}
	w_{(i,j;x,y),(i',j')^E} &= \begin{cases}
	1 & (i',j')= (i,j),\\
	0 & \text{otherwise},
	\end{cases} \\
	w_{(i,j;x,y),(i',j')^C}& = \begin{cases}
	x & (i',j')= (i,j),\\
	N-x & (i',j')= (i+1,j),\\
	0 & \text{otherwise},
	\end{cases}\\
	w_{(i,j;x,y),(i',j')^R} &= \begin{cases}
	y & (i',j')= (i,j),\\
	N-y & (i',j')= (i,j+1),\\
	0 & \text{otherwise}.
	\end{cases}\\
	\end{align*}
	Also, for each $(i,j)$, we define $B_{(i,j)^E}=1$ and $B_{(i,j)^C}=B_{(i,j)^R}=N$. Also, we take $m=t^2$. \\
	
	\noindent\textbf{Proof of Sufficiency.}
	Suppose that $\mathcal{S}$ is a yes-instance for \gt. So, for each $(i,j)\in [t]\times [t]$, we can choose $(x_{i,j},y_{i,j})\in S_{i,j}$ such that  $x_{1,j}=x_{2,j}=\cdots=x_{t,j}$ and $y_{i,1}=y_{i,2}=\cdots=y_{i,t}$. Now, define
	
	\[
	\mathcal{I}:= \{ (i,j;x_{i,j},y_{i,j})\mid (i,j)\in [t]\times [t]\}.
	\]
	It is clear that $|I|=t^2$. Also, since for each $(i,j)\in [t]\times [t]$, we just have one item $(i,j;x,y)$ in $I$, so $\sum_{a\in I} w_{a,(i,j)^E}=1 =B_{(i,j)^E}$. On the other hand, for each $(i,j)\in [2,t]\times [t]$, we have 
	$\sum_{a\in I} w_{a,(i,j)^C}= x_{i,j}+ N- x_{i-1,j} = N=  B_{(i,j)^C}$. Similarly, for each $(i,j)\in [t]\times [2,t]$, we have 
	$\sum_{a\in I} w_{a,(i,j)^R}= y_{i,j}+ N- y_{i,j-1} = N=  B_{(i,j)^R}$. Hence, the capacity constraints in all dimensions are satisfied and $(\mathcal{I},\{\mathbf{w}_a\}_{a\in \mathcal{I}}, \mathbf{B},m,d) $ is a yes-instance for \umbp. \\

	\noindent\textbf{Proof of Necessity.}
	Now, suppose that $(\mathcal{I},\{\mathbf{w}_a\}_{a\in \mathcal{I}}, \mathbf{B},t^2) $ is a yes-instance for \umbp, so there is an $I\subseteq \mathcal{I}$ with $|I|=t^2$ which satisfies capacity constraints. Fix an  $(i,j)\in [t]\times [t]$. We have $\sum_{a\in I}
	w_{a,(i,j)^E}= B_{(i,j)^E}=1$. So, there is a unique item $(i,j,x_{i,j},y_{i,j})$ in $I$. 
	We take $(x_{i,j}, y_{i,j})$ as the solution of \gt. Now, for each $(i,j)\in [2,t]\times [t]$, we have
	$$ \sum_{a\in I}
	w_{a,(i,j)^C}=x_{i,j}+N-x_{i-1,j} = B_{(i,j)^C}=N.$$
	Therefore, $x_{i,j}=x_{i-1,j}$ as desired. Similarly, for each  $(i,j)\in [t]\times [2,t]$, we have $y_{i,j}=y_{i,j-1}$. Hence, $\mathcal{S}$ is a yes-instance for \gt.   
	
	Finally, the dimension $d=3t^2-2t$ and $|I|=t^2$ are functions of $t$. So, this is a parameterized reduction. This completes the proof.
\end{proof}

Now, we are ready to prove Theorem~\ref{thm:w1_p}.
\begin{proof}[Proof of Theorem~\ref{thm:w1_p}]
	We give a parameterized reduction from \umbp\ which is W[1]-hard with respect to $d$ (Theorem~\ref{thm:umbp}). 
	Let $(\mathcal{I},\{\mathbf{w}_i\}_{i\in \mathcal{I}}, \mathbf{B},m,d)$ be an instance of \umbp, where $|\mathcal{I}|=n$ and $B=\max_{j\in [d]}B_j$. Also, for each $j\in [d]$, define $A_j= \sum_{i\in \mathcal{I}} w_{i,j}$ and $A=\max_{j\in [d]} A_j$. 
	
	We construct an instance of \fc\ as follows. Let $p=d+1$ and $\ell=0$. Let $S$ and $T$ be sets with $|S|=n-m+B$ and $|T|=A+m+1$. Also, for each $(i,j)\in [n]\times [d]$, let $W_{i,j}$ be a set with $|W_{i,j}|=w_{i,j}$, $X_j,Y_j$ be sets with $|X_j|=B-B_j$, $|Y_j|=B_j+A-A_j$. Also, let $Z=\{z_{i,j}\mid i\in [n], j\in [d]\}$. All these sets are disjoint. The graph $G=(V,E)$ is defined as a bipartite graph such that $V$ is partitioned into $(V_1,\ldots ,V_p)$, where for each $j\in [d]$,
	$$V_j=(\cup_{i\in \mathcal{I}} W_{i,j}) \cup X_j\cup Y_j\cup \{z_{i,j}: i\in [n]\}.$$
	Also, 
	$$V_{d+1}=S\cup T.$$
	The graph $G$ has a bipartition $(X,Y)$, where $X=S\cup_{j\in [d]} X_j\cup_{i\in [n],j\in [d]} W_{i,j} $ and $Y= T\cup_{j\in [d]} Y_j\cup Z $. Also, the graph $G$ has $n+1$ connected components, one component on the set $S\cup T\cup (\cup_{j\in [d]} X_j\cup Y_j)$ and one component on the set $\cup_{j\in [d]} W_{i,j} \cup \{z_{i,j}: j\in [d]\}$, for each $i\in [n]$. Each connected component can induce either a complete bipartite graph or a tree of depth two (depending on whether we want to prove Item (i) or (ii)).  
	Also, let $k=2$ and $\ell=0$.\\
	
	\begin{figure}
		\begin{center}

			\tikzset{every picture/.style={line width=0.75pt}} 
			
			\begin{tikzpicture}[x=0.75pt,y=0.75pt,yscale=-1,xscale=1]
			
			\draw  [fill={rgb, 255:red, 255; green, 255; blue, 255 }  ,fill opacity=1 ] (24,480) .. controls (24,475.58) and (27.58,472) .. (32,472) -- (56,472) .. controls (60.42,472) and (64,475.58) .. (64,480) -- (64,504) .. controls (64,508.42) and (60.42,512) .. (56,512) -- (32,512) .. controls (27.58,512) and (24,508.42) .. (24,504) -- cycle ;
			\draw  [fill={rgb, 255:red, 255; green, 255; blue, 255 }  ,fill opacity=1 ] (188,576) .. controls (188,571.58) and (191.58,568) .. (196,568) -- (220,568) .. controls (224.42,568) and (228,571.58) .. (228,576) -- (228,600) .. controls (228,604.42) and (224.42,608) .. (220,608) -- (196,608) .. controls (191.58,608) and (188,604.42) .. (188,600) -- cycle ;
			\draw  [fill={rgb, 255:red, 255; green, 255; blue, 255 }  ,fill opacity=1 ] (24,576) .. controls (24,571.58) and (27.58,568) .. (32,568) -- (56,568) .. controls (60.42,568) and (64,571.58) .. (64,576) -- (64,600) .. controls (64,604.42) and (60.42,608) .. (56,608) -- (32,608) .. controls (27.58,608) and (24,604.42) .. (24,600) -- cycle ;
			\draw  [fill={rgb, 255:red, 255; green, 255; blue, 255 }  ,fill opacity=1 ] (88,575) .. controls (88,570.58) and (91.58,567) .. (96,567) -- (120,567) .. controls (124.42,567) and (128,570.58) .. (128,575) -- (128,599) .. controls (128,603.42) and (124.42,607) .. (120,607) -- (96,607) .. controls (91.58,607) and (88,603.42) .. (88,599) -- cycle ;
			\draw  [fill={rgb, 255:red, 255; green, 255; blue, 255 }  ,fill opacity=1 ] (88,480) .. controls (88,475.58) and (91.58,472) .. (96,472) -- (120,472) .. controls (124.42,472) and (128,475.58) .. (128,480) -- (128,504) .. controls (128,508.42) and (124.42,512) .. (120,512) -- (96,512) .. controls (91.58,512) and (88,508.42) .. (88,504) -- cycle ;
			\draw  [fill={rgb, 255:red, 255; green, 255; blue, 255 }  ,fill opacity=1 ] (187,481) .. controls (187,476.58) and (190.58,473) .. (195,473) -- (219,473) .. controls (223.42,473) and (227,476.58) .. (227,481) -- (227,505) .. controls (227,509.42) and (223.42,513) .. (219,513) -- (195,513) .. controls (190.58,513) and (187,509.42) .. (187,505) -- cycle ;
			\draw    (43,512) -- (43,568) ;
			\draw    (43,512) -- (107.19,567.45) ;
			\draw    (43,512) -- (208,568) ;
			\draw    (196,513) -- (43,568) ;
			\draw    (112,512) -- (43,568) ;
			\draw  [fill={rgb, 255:red, 0; green, 0; blue, 0 }  ,fill opacity=1 ] (261,587) .. controls (261,584.24) and (263.24,582) .. (266,582) .. controls (268.76,582) and (271,584.24) .. (271,587) .. controls (271,589.76) and (268.76,592) .. (266,592) .. controls (263.24,592) and (261,589.76) .. (261,587) -- cycle ;
			\draw  [fill={rgb, 255:red, 0; green, 0; blue, 0 }  ,fill opacity=1 ] (356,590) .. controls (356,587.24) and (358.24,585) .. (361,585) .. controls (363.76,585) and (366,587.24) .. (366,590) .. controls (366,592.76) and (363.76,595) .. (361,595) .. controls (358.24,595) and (356,592.76) .. (356,590) -- cycle ;
			\draw  [fill={rgb, 255:red, 255; green, 255; blue, 255 }  ,fill opacity=1 ] (244,481) .. controls (244,476.58) and (247.58,473) .. (252,473) -- (276,473) .. controls (280.42,473) and (284,476.58) .. (284,481) -- (284,505) .. controls (284,509.42) and (280.42,513) .. (276,513) -- (252,513) .. controls (247.58,513) and (244,509.42) .. (244,505) -- cycle ;
			\draw  [fill={rgb, 255:red, 255; green, 255; blue, 255 }  ,fill opacity=1 ] (336,481) .. controls (336,476.58) and (339.58,473) .. (344,473) -- (368,473) .. controls (372.42,473) and (376,476.58) .. (376,481) -- (376,505) .. controls (376,509.42) and (372.42,513) .. (368,513) -- (344,513) .. controls (339.58,513) and (336,509.42) .. (336,505) -- cycle ;
			\draw    (264,513) -- (361,590) ;
			\draw    (356,513) -- (266,587) ;
			\draw    (264,513) -- (266,587) ;
			\draw  [fill={rgb, 255:red, 0; green, 0; blue, 0 }  ,fill opacity=1 ] (439,588) .. controls (439,585.24) and (441.24,583) .. (444,583) .. controls (446.76,583) and (449,585.24) .. (449,588) .. controls (449,590.76) and (446.76,593) .. (444,593) .. controls (441.24,593) and (439,590.76) .. (439,588) -- cycle ;
			\draw  [fill={rgb, 255:red, 0; green, 0; blue, 0 }  ,fill opacity=1 ] (534,591) .. controls (534,588.24) and (536.24,586) .. (539,586) .. controls (541.76,586) and (544,588.24) .. (544,591) .. controls (544,593.76) and (541.76,596) .. (539,596) .. controls (536.24,596) and (534,593.76) .. (534,591) -- cycle ;
			\draw  [fill={rgb, 255:red, 255; green, 255; blue, 255 }  ,fill opacity=1 ] (422,481) .. controls (422,476.58) and (425.58,473) .. (430,473) -- (454,473) .. controls (458.42,473) and (462,476.58) .. (462,481) -- (462,505) .. controls (462,509.42) and (458.42,513) .. (454,513) -- (430,513) .. controls (425.58,513) and (422,509.42) .. (422,505) -- cycle ;
			\draw  [fill={rgb, 255:red, 255; green, 255; blue, 255 }  ,fill opacity=1 ] (514,480) .. controls (514,475.58) and (517.58,472) .. (522,472) -- (546,472) .. controls (550.42,472) and (554,475.58) .. (554,480) -- (554,504) .. controls (554,508.42) and (550.42,512) .. (546,512) -- (522,512) .. controls (517.58,512) and (514,508.42) .. (514,504) -- cycle ;
			\draw    (442,514) -- (539,591) ;
			\draw    (534,512) -- (444,588) ;
			\draw    (442,514) -- (444,588) ;
			
			\draw (147,587) node [anchor=north west][inner sep=0.75pt]    {$...$};
			\draw (39,488) node [anchor=north west][inner sep=0.75pt]   [align=left] {S};
			\draw (41,584) node [anchor=north west][inner sep=0.75pt]   [align=left] {T};
			\draw (102,487) node [anchor=north west][inner sep=0.75pt]    {$X_{1}$};
			\draw (103,583) node [anchor=north west][inner sep=0.75pt]    {$Y_{1}$};
			\draw (200,583) node [anchor=north west][inner sep=0.75pt]    {$Y_{d}$};
			\draw (199,487) node [anchor=north west][inner sep=0.75pt]    {$X_{d}$};
			\draw (266,597) node [anchor=north west][inner sep=0.75pt]    {$z_{1,1}$};
			\draw (356,597) node [anchor=north west][inner sep=0.75pt]    {$z_{1,d}$};
			\draw (252,490) node [anchor=north west][inner sep=0.75pt]    {$W_{1,1}$};
			\draw (343,488) node [anchor=north west][inner sep=0.75pt]    {$W_{1,d}$};
			\draw (303,584) node [anchor=north west][inner sep=0.75pt]    {$...$};
			\draw (299,496) node [anchor=north west][inner sep=0.75pt]    {$...$};
			\draw (147,495) node [anchor=north west][inner sep=0.75pt]    {$...$};
			\draw (389,533) node [anchor=north west][inner sep=0.75pt]    {$...$};
			\draw (443,597) node [anchor=north west][inner sep=0.75pt]    {$z_{n,1}$};
			\draw (527,597) node [anchor=north west][inner sep=0.75pt]    {$z_{n,d}$};
			\draw (430,491) node [anchor=north west][inner sep=0.75pt]    {$W_{n,1}$};
			\draw (521,489) node [anchor=north west][inner sep=0.75pt]    {$W_{n,d}$};
			\draw (481,585) node [anchor=north west][inner sep=0.75pt]    {$...$};
			\draw (477,497) node [anchor=north west][inner sep=0.75pt]    {$...$};

			\end{tikzpicture}

		\end{center}
		\caption{The constructed graph $G$ in the proof of Theorem~\ref{thm:w1_p}.}
	\end{figure}
	\noindent\textbf{Proof of Sufficiency.} Suppose that $(\mathcal{I},\{\mathbf{w}_i\}_{i\in \mathcal{I}}, \mathbf{B},m,d)$ is a yes-instance. So, there exists a subset $I\subseteq \mathcal{I}$ where $|I|=m$ and for each $j\in [d]$, $\sum_{i\in I} w_{i,j}= B_j$. Now, we provide a $2$-coloring of $G$ which is $\ell$-fair. Define the color classes $C_1,C_2$ as follows.
	
	\begin{align*}
	C_1&=   (\bigcup_{i\in I, j\in [d]} W_{i,j}\cup X_j) \cup \{z_{i,j}: i\in \mathcal{I}\setminus I, j\in [d]\} \cup S,\\
	C_2&=   (\bigcup_{i\in \mathcal{I}\setminus I, j\in [d]} W_{i,j}\cup Y_j) \cup \{z_{i,j}: i\in I, j\in [d]\} \cup T.
	\end{align*}
	It is clear that $C_1$ and $C_2$ are two stable sets. Now, for each $j\in [d]$, we have 
	\begin{align*}
	|C_1\cap V_j|&= \sum_{i\in I} w_{i,j} + |X_j|+ |\mathcal{I}\setminus I|= \sum_{i\in I} w_{i,j} + B-B_j+ n-m= B+n-m=|S|,\\
	|C_2\cap V_j|&= \sum_{i\in \mathcal{I}\setminus I} w_{i,j} + |Y_j|+ |I|= \sum_{i\in \mathcal{I}\setminus I} w_{i,j} + B_j+A-A_j+ m= A+m=|T|.
	\end{align*}
	Also, $|C_1\cap V_{d+1}|=|S|$ and $|C_2\cap V_{d+1}|=|T|$. Therefore, $(C_1,C_2)$ is $0$-fair. \\
	
	\noindent\textbf{Proof of Necessity.} Now, suppose that $(C_1,C_2)$ are the color classes of a $2$-coloring for $G$ which is $0$-fair. Since $C_1$ and $C_2$ are stable sets, $S\cup \cup_{j\in [d]} X_j$ and $T\cup \cup_{j\in [d]} Y_j$ are in different color classes. Without loss of generality, suppose that $S\cup \cup_{j\in [d]} X_j \subseteq C_1$ and $T\cup \cup_{j\in [d]} Y_j\subseteq C_2$. Also, for each $i\in \mathcal{I}$, $\cup_{j\in [d]}W_{i,j}$ is a subset of one color class. Now, define 
	\[I=\{i\in \mathcal{I}: \cup_{j\in [d]}W_{i,j} \subseteq C_1\}. \]
	It is clear that for each $i\in \mathcal{I}\setminus I$, we have $\{z_{i,j}: j\in [d]\}\subseteq C_1$. 
	Since the coloring is $0$-fair, for each $j\in [d]$, we have
	\[
	|C_1\cap V_j|=  \sum_{i\in I} w_{i,j} + |X_j|+ |\mathcal{I}\setminus I|= \sum_{i\in I} w_{i,j} + B-B_j+ n-m= |C_1\cap V_{d+1}|= |S|= B+n-m.
	\]
	This implies that $ \sum_{i\in I} w_{i,j}= B_j$, as desired.
	
	Finally, note that $w_{i,j}$ and $B_j$'s are given in unary encoding and  $p=d+1$. So, this is a parameterized reduction. Moreover, if the graph $G$ is the disjoint union of $n+1$ complete bipartite graphs, then the modular width of $G$ is equal to two. Also, if each connected component of $G$ is a double-star, then $G$ is a forest whose each component is of depth $two$.
	This completes the proof. 
\end{proof}

\begin{theorem} \label{thm:xp}
	\fc\ is in XP with respect to $p$ whenever $k=2$ and can be solved in $O^*((2|V(G)|)^{p^2})$.
\end{theorem}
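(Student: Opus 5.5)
With $k=2$, a proper $2$-coloring exists only if $G$ is bipartite. If so, each connected component $H$ of $G$ has exactly two proper $2$-colorings, which swap the two sides of its bipartition $(P_H,Q_H)$. My plan is a dynamic program over the components, much like the pseudo-polynomial knapsack algorithm. To each component $H$ I attach two vectors in $\{0,\dots,n\}^p$, where $n=|V(G)|$:
\[
a_H=(|P_H\cap V_1|,\dots,|P_H\cap V_p|), \qquad b_H=(|Q_H\cap V_1|,\dots,|Q_H\cap V_p|).
\]
A $2$-coloring then amounts to choosing, for each $H$, whether $P_H$ goes into $C_1$ or into $C_2$. The group-count vector of $C_1$ is the sum of the chosen vectors, and the vector of $C_2$ is the fixed vector $(|V_1|,\dots,|V_p|)$ minus that of $C_1$.

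I will order the components $H_1,\dots,H_r$ and compute the sets $\mathcal{R}_t\subseteq\{0,\dots,n\}^p$ of vectors reachable as the group-count vector of $C_1\cap(V(H_1)\cup\dots\cup V(H_t))$. The recursion is $\mathcal{R}_0=\{\mathbf 0\}$ and $\mathcal{R}_t=\{u+a_{H_t}:u\in\mathcal{R}_{t-1}\}\cup\{u+b_{H_t}:u\in\mathcal{R}_{t-1}\}$, where I discard any vector with an entry exceeding $|V_j|$. Each $\mathcal{R}_t$ has at most $(n+1)^p$ elements, and each step costs time polynomial in $|\mathcal{R}_{t-1}|$. At the end I accept if and only if some $u\in\mathcal{R}_r$ satisfies both $\max_j u_j-\min_j u_j\le\ell$ and $\max_j(|V_j|-u_j)-\min_j(|V_j|-u_j)\le\ell$. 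The total running time is $O^*((n+1)^p)$, which lies within the claimed $O^*((2n)^{p^2})$ bound.

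Correctness follows directly from two facts. First, colorings correspond bijectively to choice sequences, and the reachable sets track exactly the achievable count vectors by induction on $t$. Second, fairness depends only on the count vectors of $C_1$ and $C_2$.

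The main obstacle is essentially absent. The only point that needs care is the reduction of all proper $2$-colorings to per-component swaps, which includes handling isolated vertices: an isolated vertex is a component with $Q_H=\emptyset$. Since the statement's exponent $p^2$ is looser than what this argument gives, the authors may intend a different encoding, perhaps guessing pairwise differences between groups. The DP above already gives XP, though, so I would present it and note that it meets the stated bound.
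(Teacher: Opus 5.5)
Your proof is correct and is essentially the paper's argument: after reducing to bipartite $G$, both fix the bipartition of each component and run a knapsack-style dynamic program over the components that decides which side of each component goes to $C_1$. The only difference is the DP state. The paper tracks the matrix of pairwise differences $\sum_{i\in I}(c_{i,j_1}-c_{i,j_2})$ with $c_{i,j}=a_{i,j}-b_{i,j}$, which gives $(2|V(G)|)^{\binom{p}{2}}$ states, whereas your group-count vector of $C_1$ needs only $(|V(G)|+1)^p$ states and so gives a somewhat sharper bound than the one stated.
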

\begin{proof}
	Let $(G;V_1,\ldots, V_p;k,\ell)$ be an instance of \fc\ where $k=2$. The graph $G$ is bipartite, otherwise the answer is no. Suppose that $G$ has $n$ connected components $G_1,\ldots, G_n$ with bipartitions $(A_1,B_1),\cdots, (A_n,B_n)$. Also, for each $i\in [n]$ and $j\in [p]$, let $|A_i\cap V_j|=a_{i,j}$ and $|B_i\cap V_j|= b_{i,j}$. In any 2-coloring of $G$, vertices of $A_i$ and $B_i$ are in different color classes. Therefore, there is an $\ell$-fair $2$-coloring for $G$, if and only if there exists a subset $I\subseteq [n]$, such that for every $j_1,j_2\in [p]$, 
	
	\begin{align}
	\left( \sum_{i\in I} a_{i,j_1}+\sum_{i\in [n]\setminus I} b_{i,j_1}\right)- \left( \sum_{i\in I} a_{i,j_2}+\sum_{i\in [n]\setminus I} b_{i,j_2}  \right) \leq \ell, \label{eq:1} \\
	\left( \sum_{i\in [n]\setminus I} a_{i,j_1}+\sum_{i\in I} b_{i,j_1}\right)- \left( \sum_{i\in [n]\setminus I} a_{i,j_2}+\sum_{i\in I} b_{i,j_2}  \right) \leq \ell.\label{eq:2}
	\end{align}
	
	For each $i\in [n]$ and $j\in [p]$, let $c_{i,j}=a_{i,j}-b_{i,j}$, $a_j=\sum_{i=1}^n a_{i,j}$ and  $b_j=\sum_{i=1}^n b_{i,j}$. Thus, \eqref{eq:1} and \eqref{eq:2} are equivalent to 
	\begin{align}
	a_{j_1}-a_{j_2}-\ell \leq \sum_{i\in I} (c_{i,j_1}-c_{i,j_2}) \leq \ell-b_{j_1}+b_{j_2}. 
	\end{align}
	Let $W=(w_{j_1,j_2})$ be a $p\times p$ matrix such that $w_{j,j}=0$ and $w_{j_1,j_2}\in  [-|V(G)|,|V(G)]$. Also, let $t\in [n]$. Then define the function 
	\[
	f(t,W)=\begin{cases}
	1 & \exists I\subseteq [t] \text{ s.t. }\forall j_1,j_2\in [p],\,  \sum_{i\in I} (c_{i,j_1}-c_{i,j_2})=w_{j_1,j_2},\\
	0 & \text{otherwise}
	\end{cases}
	\]
	We find the value of $f$ by a dynamic programming as follows. First, note that $f(1,W)=1$ if and only if either $W=O$ (i.e. $I=\emptyset$), or there is some $i\in [n]$ such that for every $j_1,j_2\in [p]$, $c_{i,j_1}-c_{i,j_2}=w_{j_1,j_2}$. This can be checked in $O(n)$. 
	
	Finally, for every $t\geq 2$,  if $f(t,W)=1$, then there exists $I\subseteq [t]$ such that for each $j_1,j_2\in [p]$,  $\sum_{i\in I} (c_{i,j_1}-c_{i,j_2})=w_{j_1,j_2}$. If $t\not\in I$, then $f(t-1,W)=1$. If $t\in I$, then $\sum_{i\in I\setminus \{t\}} (c_{i,j_1}-c_{i,j_2})=w_{j_1,j_2}- c_{t,j_1}+c_{t,j_2}$. Define the matrix $W'$, where $w'_{j,j}=0$ and $w'_{j_1,j_2}=w_{j_1,j_2}- c_{t,j_1}+c_{t,j_2}$. Then, we have $f(t-1,W')=1$. Conversely, if either $f(t-1,W)=1$ or $f(t-1,W')=1$, then $f(t,W)=1$. Thus, we can compute the value of $f(t,W)$ in terms of $f(t-1,W)$. 
	
	The size of the dynamic programming table is $n\times (2|V(G)|)^{\binom{p}{2}}$. Also, $W'$ and $c_{i,j}$'s can be constructed in $O(p^2)$ and $O(|V(G)|)$. Hence, the whole problem can be solved in time $O^*((2|V(G)|)^{\binom{p}{2}})$.
	
\end{proof}

\section{FPT Results}
\begin{theorem} \label{thm:nd}
	\fc\ is fixed parameter tractable with respect to the neighborhood diversity whenever $k=2$ and can be solved in time $O^*(2^{\nd})$ where $\nd$ is the neighborhood diversity of the input graph. 
\end{theorem}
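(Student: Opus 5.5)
Let $(T_1,\ldots,T_t)$ be the partition of $V(G)$ into types, with $t=\nd(G)$. Each type is a clique or an independent set, and any two types are either complete or anticomplete to each other. We may assume $G$ is bipartite, since otherwise there is no proper $2$-coloring. A clique type then has at most two vertices. The plan is to guess, for each type, how its vertices are distributed between the two colors, and then to check fairness with a polynomial-time computation. The guessing contributes the $2^{\nd}$ factor.

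\emph{Step 1 (guessing).} For each type $T_r$ we guess a label in $\{1,2,\mathrm{mix}\}$: all of $T_r$ gets color $1$, all of $T_r$ gets color $2$, or both colors occur on $T_r$. The three-way choice can be brought down to two in the end. If $T_r$ is a clique of size two, the label must be mix. If $T_r$ is an independent set labelled mix, then every type adjacent to $T_r$ would see both colors, so every neighbor type must be empty. Hence a mixed independent type is an isolated module of isolated vertices, i.e. its vertices are free. We therefore separate the types into \emph{free} types (independent types with no neighbor types), whose vertices can be colored arbitrarily, and \emph{constrained} types, each of which is monochromatic except size-two cliques, which are forced to be split. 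For the constrained types we enumerate all $2^{t}$ assignments of a color to each type and keep those that are proper. Properness means adjacent types get different colors, and a split clique is adjacent only to types that are already forced to be empty. This check takes polynomial time.

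\emph{Step 2 (completing with free vertices).} Fix a proper assignment on the constrained types. For every group $j\in[p]$ this gives fixed counts $\alpha_j=|C_1\cap V_j|$ and $\beta_j=|C_2\cap V_j|$. The free vertices form a set $F$ with $f_j=|F\cap V_j|$, and they can be split arbitrarily. For each $j$ we choose an integer $x_j\in[0,f_j]$ and need
\[
\max_j(\alpha_j+x_j)-\min_j(\alpha_j+x_j)\le \ell,\qquad \max_j(\beta_j+f_j-x_j)-\min_j(\beta_j+f_j-x_j)\le \ell .
\]
This can be decided in polynomial time. Guess the window $[L,L+\ell]$ for color $1$ and the window $[M,M+\ell]$ for color $2$; there are $O(|V|^2)$ choices. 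Each window constraint then restricts every $x_j$ to an interval, independently for each $j$, so feasibility amounts to checking that $p$ intervals are nonempty. Alternatively, this is an interval-constrained system that could be handled by Theorem~\ref{thm:lenstra}, but the direct argument is already polynomial.

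\emph{Step 3 (bookkeeping).} The total running time is $2^{t}\cdot \mathrm{poly}(|V|)=O^*(2^{\nd})$. Correctness follows because every $\ell$-fair proper $2$-coloring induces exactly one enumerated assignment on the constrained types together with one admissible choice of the $x_j$, and conversely every pair of these choices yields an $\ell$-fair proper $2$-coloring.

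The main obstacle is Step 1. One has to argue carefully that a non-free independent type can never be split: splitting it creates adjacency conflicts with any neighbor type, including neighbors that are themselves clique types. One also has to handle degenerate cases, such as a type consisting of a single vertex, which can be treated as both a clique and an independent set.
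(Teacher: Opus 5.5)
Your proof is correct. The enumeration half coincides with the paper's. In a bipartite graph a $K_2$-type is anticomplete to all other types and must be split, a stable type with a neighbouring type must be monochromatic, and only the isolated vertices remain free, which leaves at most $2^{t}$ colorings of the rest. One detail needs stating: for a split type $\{u,v\}$, the bit you enumerate should be which of $u,v$ receives color $1$. The two vertices may lie in different groups, so $\alpha_j,\beta_j$ depend on this choice; your count already allows for it.

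The genuine difference is the completion step. The paper puts the free vertices into the linear program LP1. Its constraints are pairwise difference bounds $s_{j_1,j_2}\le x_{j_1}-x_{j_2}\le r_{j_1,j_2}$ and box constraints $0\le x_j\le n_j$. The paper shows every vertex of this polytope is integral: the tight difference constraints at a vertex form a forest, and each component is pinned by a tight box constraint. So solving the LP suffices. You instead guess the minima $L,M$ of the per-group counts in the two color classes. This decouples the groups, confines each $x_j$ to an integer interval, and reduces feasibility to $p$ interval checks, in $O(|V|^2p)$ time per coloring.

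Your route is more elementary and needs neither an LP solver nor an integrality argument. The paper's formulation uses only that each constraint is a difference of two variables. It would therefore still work with group-pair-specific thresholds $\ell_{j_1,j_2}$, where a single window per color class no longer expresses fairness.
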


\begin{proof}
	Let $(G;V_1,\ldots, V_p;k,\ell)$ be an instance of \fc\ where $k=2$. 
	Also, let $(U_1,\ldots, U_{\nd})$ be a partition of $V(G)$, such that $G[U_i]$ is either a clique or a stable set and for each $i\neq j$, $U_i$ is either complete or incomplete to $U_j$. Since $G$ is bipartite, if $U_i$ is a clique, then $|U_i|=2$ and $U_i$ is anticomplete to other $U_j$'s, $j\neq i$. Without loss of generality, suppose that the sets $U_1,\ldots, U_t$ induce a clique $K_2$ and $U_{t+1},\ldots, U_{\nd}$ are stable sets. Also, suppose that for each $i\in [t+1,\nd-1]$, $U_i$ is complete to some $U_j$, $j\neq i$ and $V_{\nd}$ is anticomplete to all $U_j$'s. In fact, vertices in $U_{\nd}$ are singleton vertices in $G$.  
	
	If $(A,B)$ is a 2-coloring of $G$, then for each $i\in [t+1,\nd-1]$, we have either $U_i\subseteq A$ or $U_i\subseteq  B$. Also, for each $i\in [1,t]$, if $U_i=\{u,v\}$, then either $u\in A, v\in B$, or $u\in A,v\in B$. Now, fix a 2-coloring $(A,B)$ for the induced graph on the set $\cup_{i=1}^{\nd-1}U_i$. There are at most $2^{\nd}$ number of these colorings. Also, for each $j\in [p]$, let $|A\cap V_j|=a_j$ and $|B\cap V_j|=b_j$ and $|U_{\nd} \cap V_j|=n_j$. Now, each singleton vertex in $U_{\nd}$ can independently join to $A$ or $B$ to construct a 2-coloring for $G$. Now, suppose that $x_j$ vertices of $U_{\nd} \cap V_j$ are assigned to $A$ and $n_j-x_j$ other vertices are assigned to $B$. Therefore, we have $|A\cap V_j|= a_j+x_j$ and $|B\cap V_j|=b_j+n_j-x_j$. In order to have a fair-coloring, we need the following conditions hold.
	\begin{align*}
	&a_{j_1}+x_{j_1}- (a_{j_2}+x_{j_2}) \leq \ell, \ \ \forall j_1,j_2\in [p],\\
	&b_{j_1}+n_{j_1}-x_{j_1}- (b_{j_2}+n_{j_2}-x_{j_2}) \leq \ell, \ \ \forall j_1,j_2\in [p],\\
	&0\leq x_j\leq n_j, \ \ \forall j\in [p].
	\end{align*}
	If we set $r_{j_1,j_2}= \ell -a_{j_1}+a_{j_2}$ and set $s_{j_1,j_2}= b_{j_1}+n_{j_1}-b_{j_2}-n_{j_2}-\ell$, then we have the following relaxed linear programming. 
	\begin{align}
	\text{LP1:}&\nonumber \\
	&s_{j_1,j_2}\leq x_{j_1}- x_{j_2}\leq r_{j_1,j_2},\ \ \forall j_1,j_2\in [p], \label{eq:con1}\\  
	&x_j\in \mathbb{R},\ 0\leq x_j\leq n_j, \ \ \forall j\in [p]. \label{eq:con2}
	\end{align}
	Now, we are going to show that LP1  has  a feasible solution if and only if it has an integral solution. This LP is very similar to the LP corresponding to the vertex cover problem except that in the vertex cover problem, each condition contains the sum of two variables, while here Conditions \eqref{eq:con1} contain the  difference of two variables. Suppose that LP1 has a feasible solution and let $P$ be the polytope of the solutions of LP1. We need to show that the vertices of $P$ are integral. Let $X$ be a vertex of $P$. So, $X$ is the unique solution of some independent equalities corresponding to a subset of Conditions \eqref{eq:con1} and \eqref{eq:con2}, i.e. there is a subset $I\subseteq [p]\times [p]$ and $J\subseteq [p]$ such that $X$ is the unique solution of the system of equations:
	\begin{align*}
	&x_{j_1}-x_{j_2}= t_{j_1,j_2},\quad (j_1,j_2)\in I,	\\
	&x_{j}=m_j, \quad j\in J,
	\end{align*}  
	where $t_{j_1,j_2}\in\{s_{j_1,j_2},r_{j_1,j_2}\}$ and $m_j\in\{0,n_j\}$.
	Now, define a graph $H$ on the vertex set $[p]$ with the edge set $I$. We claim that since these equalities are independent, the graph $H$ is acyclic. To see this, suppose that $j_1,j_2,j_3,\ldots,j_t,j_1$ is a cycle in $H$, then the equation corresponding to $(j_t,j_1)$ can be obtained from the sum of equations corresponding to $(j_1,j_2), \ldots, (j_{t-1},j_t)$ which is in contradiction with the independency of the equations. This shows that $H$ is acyclic and each component of $H$ is a tree. Since the solution is unique, each component of $H$ contains some variable $x_j$, $j\in J$ and so all variables in this component are uniquely determined. Since $m_j$ and $t_{j_1,j_2}$ are integers, the solution $X$ is integral. This shows that $LP1$ has an integral solution. 
	
	Now, for each $2$-coloring of $\cup_{i=1}^{\nd-1} U_i$, we solve LP1 in polynomial time. Therefore, the whole problem can be solved in time $O^*(2^{\nd})$. 
\end{proof}

\begin{theorem} \label{thm:p+nd}
	\fc\ is fixed parameter tractable with respect to the neighborhood diversity and the number of groups $p$ and can be solved in time $O^*(2^{O(2^{\nd} \nd^2 p\log p)})$ where $\nd$ is the neighborhood diversity of the input graph. 	
\end{theorem}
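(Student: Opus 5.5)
We plan to reduce to an integer program whose number of variables is bounded by a function of $\nd$ and $p$, and then apply Theorem~\ref{thm:lenstra}. Let $(U_1,\ldots,U_{\nd})$ be the neighborhood-diversity partition. Refine it by the groups: put $U_{i,j}=U_i\cap V_j$ and $n_{i,j}=|U_{i,j}|$. Vertices in the same $U_{i,j}$ are twins and lie in the same group, so any coloring is determined, up to permutation inside each $U_{i,j}$, by the numbers of vertices of each $U_{i,j}$ receiving each color. Since $k$ is not bounded, we cannot use one variable per color. Instead we use color \emph{types}. A color class $C$ is a stable set, so the set $\tau(C)=\{i: C\cap U_i\neq\emptyset\}$ is a subset $T\subseteq[\nd]$ that is independent in the type graph; in addition, if $i\in T$ with $U_i$ a clique, then $|C\cap U_i|\le 1$. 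There are at most $2^{\nd}$ types.

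\textbf{Step 1 (guessing and variables).} For each type $T$, let $y_T$ be the number of colors whose class has type $T$; we require $\sum_T y_T\le k$, with empty colors allowed. Fairness couples the group counts within a single class, so aggregating over all colors of type $T$ loses information. This is where Lemma~\ref{lem:balanced} comes in. Introduce nonnegative integers $z_{T,i,j}$, the total number of vertices of $U_{i,j}$ colored with colors of type $T$. We need $\sum_T z_{T,i,j}=n_{i,j}$, $z_{T,i,j}=0$ for $i\notin T$, and $z_{T,i,j}\le y_T$ when $U_i$ is a clique. The goal is to show that the aggregated matrix $A_T=(z_{T,i,j})_{i\in T,j\in[p]}$ can be split among the $y_T$ colors of type $T$ nearly evenly. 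Apply Lemma~\ref{lem:balanced} with $b=y_T$. Each per-color matrix then has row sums (the count from $U_i$) in $\{\lfloor z_{T,i}/y_T\rfloor,\lceil z_{T,i}/y_T\rceil\}$ and column sums (the count from group $j$) in $\{\lfloor s_{T,j}/y_T\rfloor,\lceil s_{T,j}/y_T\rceil\}$, where $s_{T,j}=\sum_i z_{T,i,j}$.

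\textbf{Step 2 (linearizing fairness).} Column balance alone does not give $\ell$-fairness for every color: the floors and ceilings for different groups might align badly. So we additionally guess the residue structure. For each $T$ and $j$, write $s_{T,j}=q_{T,j}y_T+r_{T,j}$ with $0\le r_{T,j}<y_T$, introducing $q_{T,j}$ and $r_{T,j}$ as integer variables. We also guess the ordering of the residues $r_{T,1},\ldots,r_{T,p}$, giving at most $p!$ choices per type, hence $2^{O(2^{\nd}p\log p)}$ guesses overall. Given this ordering, we choose the balanced decomposition so that the colors receiving the extra $+1$ are nested: the colors that get a ceiling in group $j$ are a prefix of $[y_T]$ of length $r_{T,j}$. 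Then every color's group-count vector has the form $q_{T,\cdot}$ plus the indicator of the groups whose residue exceeds a threshold. With the ordering fixed, $\ell$-fairness of all colors of type $T$ becomes finitely many linear constraints of the form $q_{T,j_1}-q_{T,j_2}+[\cdot]\le \ell$. The nontrivial point here is that Lemma~\ref{lem:balanced} gives only balance, not this nested pattern. We would either permute colors per column, which is impossible since colors are shared across groups, or re-derive a refined version via the bipartite multigraph of Lemma~\ref{lem:dewerra}, adding an auxiliary vertex that forces nesting. I expect this realizability step, showing that any feasible IP solution yields an actual coloring whose per-color group vectors match the prescribed pattern, to be the main obstacle. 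A fallback is to guess, for each type, the multiset of $0/1$ offset patterns by at most $p+1$ nested threshold counts, still bounded in $p$.

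\textbf{Step 3 (running time).} Properness within a class is automatic, since each type is an independent set of modules and the clique bounds hold. The number of variables is $O(2^{\nd}\nd p)$, and the guesses number $2^{O(2^{\nd}p\log p)}$. Theorem~\ref{thm:lenstra} then gives running time $(2^{\nd}\nd p)^{O(2^{\nd}\nd p)}$ times the guesses, which fits the bound $O^*(2^{O(2^{\nd}\nd^2p\log p)})$. Finally we verify the converse direction: from an actual fair coloring we read off $y_T$, the $z$'s and the residues, and check that the constraints hold. This direction is immediate, except that we must choose the guessed ordering consistently with the true residues.
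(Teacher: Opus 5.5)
You are right that column balance from Lemma~\ref{lem:balanced} does not give per-class fairness, and this is exactly the step the paper's proof rests on. The paper uses the same aggregated ILP (your $y_T,z_{T,i,j}$ are its $y_{_I},x_{i,j,I}$) and infers fairness from \eqref{eq:e} and \eqref{eq:col}. That inference fails, and ILP2 is not even equivalent to the problem. Take $G=K_2$ with $a\in V_1$, $b\in V_2$, $k=2$, $\ell=0$. The two vertices form one clique module $U_1$, and $y_{\{1\}}=2$, $y_\emptyset=0$, $x_{1,1,\{1\}}=x_{1,2,\{1\}}=1$ satisfies ILP2. Yet every proper $2$-coloring has the class $\{a\}$ with group counts $(1,0)$.

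Your repair has a genuine gap at the point you flag, and it cannot be closed as set up, because the obstruction comes from clique rows, not columns. In the same example, put the clique bound on row sums, $\sum_j z_{T,i,j}\le y_T$. (Your per-group bound $z_{T,i,j}\le y_T$ even admits $y_T=1$ with the improper class $\{a,b\}$, so properness is not automatic.) Your IP is then satisfied by $y_T=2$, $q=(0,0)$, $r=(1,1)$. The nested pattern $(1,1),(0,0)$ is the only fair split of these column totals, and it requires one class to contain both vertices of a clique module. So no auxiliary-vertex refinement of Lemma~\ref{lem:dewerra} and no guessing of offset patterns can realize it. The aggregated data simply does not record from which group each class draws its vertex of each clique module.

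Two further points. First, the constraint $s_{T,j}=q_{T,j}y_T+r_{T,j}$ is quadratic in the variables, so Theorem~\ref{thm:lenstra} does not apply. The residues are unnecessary anyway. Give color $c\in[y_T]$ the amount $q_{T,j}+[c\le r_{T,j}]$ in group $j$; these ceiling sets are automatically nested. All $y_T$ classes are then $\ell$-fair if and only if $s_{T,j}-s_{T,j'}\le \ell y_T$ for all $j,j'$:
\begin{itemize}
\item if $q_{T,j}-q_{T,j'}=\ell$, the inequality forces $r_{T,j}\le r_{T,j'}$;
\item $q_{T,j}-q_{T,j'}\ge \ell+1$ contradicts it.
\end{itemize}
So the paper's \eqref{eq:e} is already the right column condition, and no ordering needs to be guessed.

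Second, the missing idea is to put the clique information into the type. Let a type $\sigma$ consist of $T$ together with, for each clique module $i\in T$, the group from which every class of type $\sigma$ takes its unique vertex of $U_i$. Then:
\begin{itemize}
\item all classes of type $\sigma$ receive the same vector $\kappa_\sigma\in\mathbb{Z}^p$ from clique modules;
\item the clique counts become linear in $y_\sigma$;
\item only stable modules remain to be split. They impose no per-class bound, so you can split the column totals by the nested rounding above and distribute rows arbitrarily.
\end{itemize}
By the same argument, the linear condition $s_{\sigma,j}-s_{\sigma,j'}\le y_\sigma(\ell-\kappa_{\sigma,j}+\kappa_{\sigma,j'})$, together with $x_{i,j,\sigma}\le n_{i,j}y_\sigma$, is exact. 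This gives fixed-parameter tractability in $\nd+p$. However, there can be up to $(p+1)^{\nd}$ types, so this argument does not yield the exponent stated in the theorem.
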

\begin{proof}
	Let $(G;V_1,\ldots, V_p;k,\ell)$ be an instance of \fc. 
	Also, suppose that $(U_1,\ldots, U_{\nd})$ is a partition of $V(G)$, such that $U_i$ is a stable set in $G$ for $i\in[t]$ and is a clique in $G$ for $i\in [t+1,\nd]$, and for each $i\neq j$, $U_i$ is either complete or incomplete to $U_j$. Also, suppose that $|U_i\cap V_j|=n_{i,j} $ for all $i\in [\nd]$ and $j\in [p]$. 
	
	A subset $I\subseteq [\nd]$ is called valid if for every $i,i'\in I$, $U_i$ is anticomplete to $U_{i'}$ in $G$. Also, let $\mathcal{I}$ be the set of all valid subsets of $[\nd]$. We are going to find a fair coloring $(C_1,\ldots, C_{k})$. A nonempty subset $C_j$ is called of type $I$, if for every $i\in I$, $C_j\cap U_i\neq \emptyset$ and for every $i\notin I$, $C_j\cap U_i=\emptyset$. For each type $I\in \mathcal{I}$, we define a variable $y_{_I}$ as the number of color classes $C_j$ of type $I$. Also, for each $j\in [p]$ and $I\in \mathcal{I}$ and $i\in I$, we define a variable $x_{i,j,I}$ as the number of vertices in $U_i\cap V_j$ which are located in a color class of type $I$. Now, consider the following integer programming. 
	
	\begin{align}
	ILP2: \nonumber &\\
	& \sum_{I\in \mathcal{I}} y_{_I}=k, \label{eq:a} \\
	& \sum_{I\in \mathcal{I}: I\ni  i} x_{i,j,I} =n_{i,j}, \quad \forall\ i\in [\nd],\ \forall\ j\in [p], \label{eq:b}\\
	& \sum_{j=1}^p x_{i,j,I} \geq y_{_I}, \quad \forall\ I\in \mathcal{I},\ \forall\ i\in I\cap [t], \label{eq:c} \\
	& \sum_{j=1}^p x_{i,j,I} =  y_{_I}, \quad \forall\ I\in \mathcal{I},\  \forall\ i\in I\cap [t+1,\nd],  \label{eq:d} \\
	& \sum_{i: i\in I} x_{i,j,I}- \sum_{i: i\in I} x_{i,j',I}\leq y_{_I} \ell, \quad \forall\ j,j'\in [p],\ \forall\ I\in \mathcal{I}, \label{eq:e}\\
	& \text{if } y_{_I}=0, \text{then } \forall j\in [p], \forall i\in  I, x_{i,j,I}=0,\quad  \forall\ I \in \mathcal{I}, \label{eq:f}\\
	& y_{_I},x_{i,j,I}\in \mathbb{N}\cup \{0\},\quad  \forall\ I\in \mathcal{I}, i\in I, j\in [p]. \label{eq:h}
	\end{align}

	Condition~\eqref{eq:a} guarantees that the number of color classes is equal to $k$. Condition~\eqref{eq:b} guarantees that $(C_1,\ldots, C_k)$ is a $k$-partition of $V(G)$. Condition~\eqref{eq:c} and \eqref{eq:d} guarantee that each color class intersects each $U_i$, $i\in I$, in at least one vertex if $U_i$ is a stable set ($i\in [t]$) and in exactly one vertex if $U_i$ is a clique ($i\in [t+1,nd]$). Condition~\eqref{eq:e} guarantees fairness of each color class. Condition~\eqref{eq:f} is trivial because if $y_I=0$, then we have no color class of type $I$. 
	
	First, suppose that we have a fair coloring $(C_1,\ldots, C_k)$. Define $y_{_I}$ as the number of color classes of type $I$. Also, for each $I\in \mathcal{I}$, $i\in I$ and $j\in [p]$, we define $x_{i,j,I}$ as the number of vertices in $U_i\cap V_j$ which are in a color class of type $I$. Conditions~\eqref{eq:a}, \eqref{eq:b}, \eqref{eq:f} and \eqref{eq:h} trivially hold. Conditions~\eqref{eq:c} and \eqref{eq:d} hold because if $U_i$, $i\in I$, is a stable set (resp. clique), then $U_i$ intersects each color class of type $I$ in at least (resp. exactly) one vertex. Since the coloring is $\ell$-fair, summing up the fairness inequalities over all color classes of type $I$, yields \eqref{eq:e}.
	Therefore, ILP2 has an integral solution. 
	
	For the converse, suppose that ILP2 has an integral solution. Then, we construct a fair coloring $(C_1,\ldots, C_k)$. First fix a type $I\in \mathcal{I}$. Now, define the matrix $A$ such that $A_{ij}=x_{i,j,I}$ and apply Lemma~\ref{lem:balanced} for the matrix $A$ and $b=y_{_I}$. So, $A$ can be written as the sum $A^1+\cdots+A^{y_{_I}}$ such that 
	for each row $i\in [\nd]$, each column $j\in [p]$, and each $s,s'\in [y_{_I}]$, we have 
	\begin{equation} \label{eq:row}
	|(A^s)_{i}- (A^{s'})_{i}|\leq 1,
	\end{equation}
	and
	\begin{equation} \label{eq:col}
	|(A^s)^{j}- (A^{s'})^{j}|\leq 1.
	\end{equation}
	For each matrix $A^s$, we construct a color class $C_s$ which has exactly $A^s_{i,j} $ vertices in $U_i\cap V_j$. If for $i\in I$, $U_i$ is a clique, then by Conditions~\eqref{eq:d} and \eqref{eq:row}, $C_s$ intersects $U_i$ in exactly one vertex. Also, $I$ is a valid set. So, $C_s$ is a stable set and we have a proper coloring for $G$ with $k$ colors. Finally, because of Conditions~\eqref{eq:e} and \eqref{eq:col} the coloring is $\ell$-fair. So, we have an $\ell$-fair proper coloring with $k$ colors. Hence, solving the problem is equivalent to solving ILP2.
	
	Now, ILP2 has exactly $2^{\nd}\times \nd\times p+ 2^{\nd}$ variables. Thus,  by Theorem~\ref{thm:lenstra}, we can solve ILP2 in $O^*(2^{O(2^{\nd} \nd^2 p\log p)})$.
\end{proof}

\section{Concluding Remark}
In this paper, the structural complexity landscape of the fair coloring problem has been investigated. It is proved that the boundary of W[1]-hardness lies in the number of groups $p$ even for very restrictive graph classes such as forests and graphs with modular-width equal to two. Also, we proved that when we add neighborhood diversity as a parameter, the problem becomes tractable. The complexity for some parameters remains unknown, which are listed below.
\begin{itemize}
	\item In Theorem~\ref{thm:w1_p}, we have proved that the problem is W[1]-hard with respect to the number of groups $p$ (even for forests and graphs with $\mw=2$), while Theorem~\ref{thm:p+nd} shows that the problem is FPT with respect to $(\nd,p)$ and Theorem~\ref{thm:nd} shows that it is FPT with respect to $\nd$ when $k=2$. So, it is natural to ask if the problem is FPT with respect to $\nd$ in general? An easier question is that if the problem is FPT with respect to the vertex cover number ($\vc$)?
	
	\item In this paper, we proved that the problem admits an XP algorithm with respect to the number of groups $p$ when the number of colors is fixed to $k=2$ (Theorem~\ref{thm:xp}). This gives rise to the question if the problem is in XP with respect to $p$ for a general number of colors $k$?
	
	\item  Finally, establishing a polynomial kernel or proving stronger lower bounds for the problem with respect to the mentioned parameters remains a challenging open direction.
\end{itemize}


\end{document}